\documentclass[11pt]{article}

\usepackage[margin=0.9in]{geometry}
\usepackage{amsmath,amssymb,amsthm,mathtools}
\usepackage{booktabs}
\usepackage{xcolor}
\usepackage{enumitem}
\usepackage{algorithm}
\usepackage{algpseudocode}
\algrenewcommand{\algorithmicrequire}{\textbf{Input:}}
\algrenewcommand{\algorithmicensure}{\textbf{Output:}}
\usepackage{microtype}
\usepackage[hidelinks]{hyperref}
\usepackage{url}

\allowdisplaybreaks

\newtheorem{theorem}{Theorem}[section]
\newtheorem{lemma}{Lemma}[section]
\newtheorem{proposition}{Proposition}[section]

\theoremstyle{remark}

\DeclareMathOperator{\OPT}{OPT}
\DeclareMathOperator{\rank}{rank}
\newcommand{\cl}{\operatorname{cl}}
\newcommand{\eps}{\varepsilon}

\newcommand{\Insert}{\mathsf{Insert}}
\newcommand{\Remove}{\mathsf{Remove}}
\newcommand{\staridx}{\ast}

\title{Multilevel Dynamic Thinning for Matroid Secretary}
\author{
Dennis Joyce
}
\date{}

\begin{document}
\maketitle
\begingroup
\renewcommand{\thefootnote}{}
\begin{NoHyper}
\footnotetext{East Texas A\&M University. djoyce2@leomail.etamu.edu}
\end{NoHyper}
\endgroup

\begin{abstract}
In the matroid secretary problem, weighted elements arrive in random order, and an online algorithm must irrevocably accept elements forming a high-weight independent set.  
Dynamic Thinning is a recent, conceptually simple $3.1462$-competitive algorithm for the matroid secretary problem that maintains a random reference set.  Given this reference set, the elements in its maximum-weight independent subset have been accepted independently with a time-dependent probability.  We extend this approach by replacing the single reference set with a finite hierarchy of nested reference sets.
   For every fixed $\eps>0$, the resulting algorithm is $(e+\eps)$-probability-competitive for arbitrary matroids, with $O_\eps(n^2)$ queries in the worst case.  
\end{abstract}

\section{Introduction}

 The matroid secretary problem studies online acceptance under a matroid constraint.  The elements have fixed weights but arrive in uniformly random order.  The algorithm sees each element only when it arrives and must immediately decide whether to accept it.  A probability-competitive ratio $C$ means that every element of a fixed maximum-weight independent set is accepted with probability at least $1/C$.  The classical secretary problem is the rank-one case, where the optimal probability of accepting the best element is $1/e$.  

For arbitrary matroids, the early guarantees depended on the rank $r$, the maximum size of an independent set.  Babaioff et al.~\cite{babaioff2007,babaiokempe2018} obtained a competitive ratio 
of $O(\log r)$.  Later work reduced the rank dependence first to $O(\sqrt{\log r})$~\cite{chakraborty2012} and then to 
$O(\log\log r)$~\cite{lachish2014,feldman2018}.  Constant competitive ratios were already known for structured families such as uniform~\cite{kleinberg2005,babaioff2007}, graphic~\cite{banihashem2025,dutting2026}, and laminar matroids~\cite{imwang2011,huanglaminar2024}.

Recent work has substantially changed the landscape.  Singla~\cite{singla2026} gave a $4$-probability-competitive algorithm for arbitrary matroids.  Bérczi et al.~\cite{berczi2026} and Abdi et al.~\cite{abdi2026} independently obtained the exact $e$-probability-competitive ratio for linear matroids.  
Huang~\cite{huang2026} extended the $e$-probability-competitive guarantee to arbitrary matroids using an expected polynomial number of independence queries.  These breakthroughs leave a useful design question: can a simple algorithm approach the $e$-probability-competitive ratio with fewer independence queries?

Dynamic Thinning is an ordinal algorithm, meaning that it uses only comparisons among observed weights.  The algorithm, given by Chan et al.~\cite{clw2026}, has probability-competitive ratio $3.1462$ and uses a quadratic number of independence queries.  It first rejects an initial sample and uses it to form a \emph{reference set} $X$.  This is an auxiliary set of arrived elements, distinct from the set $A$ of irrevocably accepted elements.  When an element $e$ arrives, the algorithm checks whether $e$ belongs to the greedy set of $X\cup\{e\}$ (its maximum-weight independent subset, with ties broken consistently).  If not, it rejects $e$.  If $e$ can be added to the greedy set of $X$ while keeping all its existing elements, it is accepted with a probability that depends on the arrival time.  If $e$ instead replaces an element $f$ in that greedy set, it is accepted when $f\notin A$ and rejected otherwise.  An accepted element joins both $X$ and $A$.  Meanwhile, unaccepted elements in the greedy set of $X$ are randomly removed from $X$, leaving room for later arrivals.  This algorithm builds on Singla's framework of reversible reference updates to preserve a configuration distribution~\cite{singla2026}.  Its conditional-thinning invariant is motivated by Abdi et al.~\cite{abdi2026}: given the current reference, processed elements of its greedy basis are accepted independently with probability $1/2$.  Dynamic Thinning makes this probability time-dependent.

We extend Dynamic Thinning by replacing its single reference set with a finite reference hierarchy
\[
X_0\subseteq X_1\subseteq\cdots\subseteq X_j
\]
during stage $j$, where there are $m$ stages in total.  The accepted elements always belong to the greedy set of $X_0$.  At predetermined times, the algorithm adds a new outer reference set containing all elements that have arrived so far.  We call this operation a \emph{refresh}.  An element rejected earlier may therefore re-enter the reference hierarchy later.  The additional reference sets allow more arrived elements to be used as references, which can improve acceptance probabilities. With only one stage ($m=1$), our algorithm reduces exactly to Dynamic Thinning.

\paragraph{Main result.}
For every fixed $\eps>0$, we obtain an ordinal $(e+\eps)$-probability-competitive algorithm for arbitrary matroids.  When the refresh times form a geometric progression and the number of stages is $m=\lceil\frac{e}{\epsilon}\rceil$, every element of the fixed tie-broken optimum is accepted with probability at least $1/(e+\eps)$. 

\begin{theorem}[Near-$e$ Multilevel Dynamic Thinning]\label{thm:main}
For every $\eps>0$, there exists an ordinal $(e+\eps)$-probability-competitive algorithm for arbitrary matroids.  It knows only the number $n$ of elements and uses $O\bigl(\lceil e/\eps\rceil 2^{\lceil e/\eps\rceil}n^2\bigr)=O_\eps(n^2)$ independence queries in the worst case, all on already-arrived elements.
\end{theorem}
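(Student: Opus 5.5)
The plan is to build the multilevel algorithm explicitly, prove a distributional invariant that holds at every time, and then read off each optimal element's acceptance probability as an integral. That integral can be made to come within $\eps$ of $1/e$ by choosing geometric refresh times.

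\textbf{Setup and invariant.} Work in continuous time: each element gets an arrival time drawn independently and uniformly from $[0,1]$. This is equivalent to a random order, and the algorithm can simulate it knowing only $n$ by sorting $n$ i.i.d.\ uniforms. Fix refresh times $0<t_1<t_2<\cdots<t_m<1$ with $t_{i+1}/t_i$ constant. During stage $j$ (time in $[t_j,t_{j+1})$) the algorithm keeps nested sets $X_0\subseteq\cdots\subseteq X_j$. At $t_j$ it appends $X_j:=$ all arrived elements. An arriving element $e$ is tested against the greedy sets level by level, from the outermost inward, using the Dynamic Thinning rule at each level. At level $i$, if $e$ is spanned by higher-weight elements of $X_i$, it stops. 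If it extends the greedy set it is passed inward, or, at level $0$, accepted with a time-dependent probability $p_0(t)$. If it swaps with some $f$, the decision depends on whether $f$ is "marked" at the next inner level. Unmarked greedy elements are thinned from $X_i$ at rates $\lambda_i(t)$. The central lemma, generalizing the conditional-thinning invariant of Dynamic Thinning (Chan et al.), is the following. Conditioned on the outer reference $X_j$, each inner set is obtained from the greedy basis of the next outer set by independent Bernoulli thinning with explicit time-dependent probabilities $q_{i}(t)$. Moreover, $A$ equals the processed elements of the greedy basis of $X_0$. I would prove it by induction over events (arrival, thinning, refresh), checking in each case that the Bernoulli product form is preserved. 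This is exactly the reversibility/configuration-preservation argument of Singla's framework applied levelwise. A refresh trivially restarts the outer level with the full arrived set, which has the right law.

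\textbf{Acceptance probability.} Fix an element $e^\ast$ of the tie-broken optimum arriving at time $t$. Since $e^\ast\in\OPT$, the higher-weight elements that arrived before it never span it, so it enters the greedy set of the outermost reference. By the invariant, conditioned on $t$, it survives to level $0$ and is accepted with probability $\prod_i(\text{level factor})\cdot p_0(t)$. This probability is independent of the matroid because the thinnings are independent of structure. We then obtain a per-stage ODE for the $q_i$, as in the one-level analysis, whose solution gives the acceptance density $g(t)$. The target is $\int_0^1 g(t)\,dt\ge 1/(e+\eps)$. With a single level the curve is $3.1462$-competitive. With geometric refreshes and $m=\lceil e/\eps\rceil$ stages, the curve should track the optimal single-item threshold curve $1/e\cdot\mathbf{1}[t\ge 1/e]/t$ up to a discretization loss of order $1/m$. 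Bounding that loss by $\eps$ via a Riemann-sum comparison is the step I expect to be the main obstacle. The loss must not compound across levels, and I would try to show each stage loses at most a multiplicative $(1-O(1/m))$ factor that telescopes.

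\textbf{Query count.} Each arrival requires greedy membership tests at up to $m$ levels. Each test is an incremental greedy update over at most $n$ elements. The case analysis of which thinned sub-configurations must be checked (marked or unmarked at each of the inner levels) contributes at most $2^m$ configurations. That totals $O(m2^m n)$ queries per arrival and $O(m2^m n^2)$ overall, all on arrived elements since references contain only arrived elements. Substituting $m=\lceil e/\eps\rceil$ gives the stated bound.
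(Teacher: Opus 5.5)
Your outline has the right architecture: nested references refreshed at geometric times, a distributional invariant, an integrated acceptance density, and a query count. But the central invariant is only asserted, and that invariant is where essentially all of the difficulty lies.

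You never define $p_0(t)$, $q_i(t)$, $\lambda_i(t)$ or ``marked,'' so there is no algorithm to check. ``Singla's reversibility argument applied levelwise'' also does not go through. Inserting one element changes the span indices of many smaller active elements at several levels at once, so independent per-level Dynamic Thinning rules do not preserve a product form. The paper needs three pieces here:
\begin{itemize}
\item a recursive insertion scan $\Insert$ with a recycling step, in which an unaccepted element at level $i\ge1$ is returned to level $0$ by a second recursive insertion;
\item a paired removal scan $\Remove$ whose level probabilities $\eta,\pi_i$ sum to one by the finite-level balance identity (Lemma~\ref{lem:scalarbalance});
\item the resulting local flow identity (Lemma~\ref{lem:localflow}), used to verify the forward equations.
\end{itemize}

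Moreover, the invariant as you state it cannot hold. It forces $A\subseteq X_0\subseteq\OPT(X_1)$. Right after the refresh $X_1=H_{T_1}$, however, an accepted element can be spanned by heavier elements that arrived, and were rejected, after it; a rank-one matroid already shows this. The invariant that can actually be maintained keeps spanned unaccepted elements in $X_0$. Conditional on the whole hierarchy, it accepts each $f\in\OPT(X_0)$ independently with probability $g_i/p$, where $i$ is the first level at which $f$ leaves the greedy set. No single thinning parameter per level captures this.

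Likewise, an optimal element's acceptance probability is not matroid-independent ``because the thinnings are independent of structure.'' Whether it extends or exchanges, and with whom, depends on the matroid. In the paper this independence comes out of the global distribution: the marginal weights give $\Pr[e\in X_0]=p$, and Lemma~\ref{lem:inherit} gives $\Pr[e\in A\mid X_0,\ldots,X_{m-1}]=g(1)/p$.

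The remaining steps are misweighted.
\begin{itemize}
\item The discretization you expect to be the main obstacle is short. Take $T_i=(1+1/m)^{i-m}$; you also need $p=T_0\ge1/e$ so that $g<p$ and all rates and probabilities are valid, which your unspecified first threshold does not ensure. Then $b(s)\ge s/r$ gives $q(t)\ge(p/t)^r$, hence $g(1)\ge r^{-(m+1)}>1/\bigl(e(1+1/m)\bigr)$.
\item Your query count omits the internal deletions, and these dominate. Each is a $\Remove$ call costing $O((j+1)2^jn)$ queries. You also need that each element is deleted at most once per stage, so that there are only $O(mn)$ such calls.
\item The factor $2^j$ does not come from enumerating marked/unmarked configurations. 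It comes from the recursion tree of the scans: at every split, the span index of the element being inserted or removed strictly decreases.
\end{itemize}
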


Huang~\cite{huang2026} gives an exact \(e\)-probability-competitive algorithm for arbitrary matroids using a continuous survival process that tracks which sampled elements remain available.  Its expected running time and expected number of independence queries are polynomial. 
Our algorithm is essentially a finite-level discretization of Huang's survival-process construction: the nested reference sets replace his continuous filtration by finitely many levels (i.e., stages).  For each fixed instance, under our geometric refresh times, the distribution of the final accepted set converges to Huang's output distribution as the number $m$ of levels tends to infinity. 
 Our algorithm updates the reference hierarchy using insertion and removal rules: their design, including how they are paired to maintain the prescribed joint distribution, is derived directly from Huang's forward and reverse kernel construction~\cite{huang2026}.
 More details are discussed in Section~\ref{sec:related}.

The paper is organized as follows.
Section~\ref{sec:prelim} introduces the model and the greedy facts used throughout.  Section~\ref{sec:multilevel} defines the refresh times and the nested reference sets.
 Sections~\ref{sec:local} and~\ref{sec:algorithm} give the update rules and the online algorithm, respectively.  Section~\ref{sec:distribution} proves the joint distribution, and Section~\ref{sec:ratio} derives the near-$e$ guarantee and the query bound.  Section~\ref{sec:related} compares our algorithm with prior work.

\section{Preliminaries}\label{sec:prelim}

\subsection{Matroid secretary problem}

Let $E$ contain $n$ elements, each with a fixed positive weight $w_e$.  Their arrival order is uniformly random, and an element's weight becomes known only when it arrives.  An online decision to accept or reject an element cannot be revised.  The algorithm knows $n$ but learns feasibility only by asking whether sets of already-arrived elements are independent.  Its accepted set must be independent after every decision.
Feasibility is specified by a matroid $M=(E,\mathcal I)$.  For $S\subseteq E$, let $\rank(S)$ be the maximum size of an independent subset of $S$.  We write $S+e=S\cup\{e\}$ and $S-f=S\setminus\{f\}$.  The closure
\[
\cl(S)=\{e\in E:\rank(S+e)=\rank(S)\}
\]
records which elements cannot increase the rank of $S$.  In this case we also say that $S$ spans $e$.

To make the optimum unambiguous when weights tie, fix a label order independently of the arrivals.  Write $x\succ y$ when $x$ precedes $y$ in decreasing weight-and-label order.  Write $\OPT(S)$ for the independent set obtained by considering the positive-weight elements of $S$ in this order and adding each element that preserves independence.  We call $\OPT(S)$ the \emph{greedy set} of $S$.  Matroid greedy optimality makes it a maximum-weight independent subset of $S$.  In particular, $O=\OPT(E)$ is the canonical optimum used in our guarantee.  If $A$ is the random accepted set, a $C$-\emph{probability-competitive} algorithm satisfies
\[
\Pr[e\in A]\ge \frac1C\qquad(e\in O).
\]
Since weights are nonnegative, summing these inequalities with weights $w_e$ also gives the usual $C$-competitive expected-weight bound.

For timing purposes, draw $n$ independent uniform points in $(0,1)$ and assign them, in increasing order, to the successive arrivals.  The resulting timed process has the same random arrival order, but lets us specify sampling thresholds and internal updates between arrivals.  Drawing the points supplies no information about future elements.

\subsection{Greedy facts for the updates}

The update rules use three elementary consequences of matroid exchange.  The first two greedy-set facts were proved by Singla~\cite{singla2026} and are also stated in the reference-set language used by Chan et al.~\cite{clw2026}.  We record them without proof.
\begin{lemma}[Inheritance of a canonical optimum]\label{lem:inherit}
For every $e\in O$ and every $S\subseteq E\setminus\{e\}$,
\[
e\in\OPT(S+e).
\]
More generally, if $e\in\OPT(T)$ and $e\in S\subseteq T$, then $e\in\OPT(S)$.
\end{lemma}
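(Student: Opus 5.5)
We prove the general statement first and obtain the first claim as the special case $T=E$, $S\leftarrow S+e$. The tool throughout is the standard description of the greedy set under the fixed strict order $\succ$. Namely, an element $x\in T$ with $w_x>0$ lies in $\OPT(T)$ if and only if
\[
x\notin\cl\bigl(\{y\in T:\ y\succ x\}\bigr).
\]
This holds because the greedy procedure processes elements in $\succ$-order. When it reaches $x$, the elements it has already accepted form a basis of the set of $T$-elements preceding $x$, since greedy keeps a maximal independent subset of each prefix. Hence $x$ is added exactly when it is not spanned by that prefix. I will state this characterization as a one-line consequence of how greedy is defined and of the fact that all maximal independent subsets of a set have equal size, so the span of the accepted prefix equals the span of the whole prefix.

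With the characterization in hand, the monotonicity is immediate. Suppose $e\in\OPT(T)$ and $e\in S\subseteq T$. The weight of $e$ is positive, as for all elements. Then
\[
\{y\in S: y\succ e\}\subseteq\{y\in T: y\succ e\},
\]
and closure is monotone, $\cl(A)\subseteq\cl(B)$ for $A\subseteq B$, which follows from monotonicity of rank together with submodularity. Therefore $e\notin\cl(\{y\in T:y\succ e\})$ implies $e\notin\cl(\{y\in S: y\succ e\})$, and so $e\in\OPT(S)$.

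For the first claim, take $T=E$. Since $e\in O=\OPT(E)$ and $S+e\subseteq E$ contains $e$, the general statement gives $e\in\OPT(S+e)$.

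The only step needing care is the prefix-span characterization. I expect it to be the main, though mild, obstacle, because it must be stated correctly in terms of rank. When greedy reaches $x$, let $I$ be the set it has accepted so far and $P=\{y\in T:y\succ x\}$. Each rejected $y\in P$ was spanned by the accepted elements at its own time, so an induction gives $\cl(I)=\cl(P)$. Thus $x$ is accepted iff $I+x$ is independent, iff $x\notin\cl(I)=\cl(P)$. Monotonicity of closure must be handled similarly: if $\rank(B+x)=\rank(B)$ and $A\subseteq B$, then submodularity forces $\rank(A+x)=\rank(A)$.
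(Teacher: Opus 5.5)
The paper states this lemma without proof and defers to Singla. Your argument is the standard one: membership in the greedy set is equivalent to not being spanned by the $\succ$-earlier elements of the set, and then monotonicity of closure is applied to $\{y\in S:y\succ e\}\subseteq\{y\in T:y\succ e\}$. It is correct, including the reduction of the first claim to the case $T=E$.

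One slip needs fixing. Your last sentence states the rank form of closure monotonicity backwards. From $\rank(B+x)=\rank(B)$ and $A\subseteq B$ you cannot conclude $\rank(A+x)=\rank(A)$; for instance, take $A=\varnothing$ and let $B$ be any set that spans a non-loop $x$.

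What submodularity and monotonicity of rank actually give, for $x\notin B$, is
\[
0\le\rank(B+x)-\rank(B)\le\rank(A+x)-\rank(A).
\]
Hence $x\in\cl(A)$ implies $x\in\cl(B)$. Equivalently, $x\notin\cl(B)$ implies $x\notin\cl(A)$, which is exactly the direction your main step uses with $x=e$.
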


\begin{lemma}[Single greedy exchange]\label{lem:exchange}
Let $S\subseteq E\setminus\{e\}$ and suppose that $e\in\OPT(S+e)$.  If $e\notin\cl(S)$, then
\[
\OPT(S+e)=\OPT(S)+e.
\]
Otherwise there is a unique $f\in\OPT(S)$ such that
\[
\OPT(S+e)=\OPT(S)-f+e.
\]
Moreover, deleting a member $f\in\OPT(S)$ preserves every other member of that greedy set:
\[
\OPT(S)-f\subseteq\OPT(S-f).
\]
\end{lemma}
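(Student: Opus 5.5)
The plan is to derive all three statements from the standard prefix characterization of the greedy set. For $x\in S$ let $P_x(S)=\{y\in S: y\succ x\}$ be the set of elements of $S$ considered before $x$. Greedy adds $x$ exactly when $x$ is independent of what has been kept so far. What has been kept so far is a basis of $P_x(S)$, so it has the same closure as $P_x(S)$. Hence
\[
x\in\OPT(S)\iff x\notin\cl\bigl(P_x(S)\bigr).
\]
I would first prove this by induction along the order $\succ$; it is routine. I would then use it throughout, together with monotonicity of $\cl$ and the exchange property of closure: if $x\in\cl(P+e)\setminus\cl(P)$, then $e\in\cl(P+x)$.

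\emph{Deletion part first.} Let $f,x\in\OPT(S)$ with $x\ne f$.
\begin{itemize}[nosep]
\item If $x\succ f$, then $P_x(S-f)=P_x(S)$, so $x\in\OPT(S-f)$.
\item If $f\succ x$, then $P_x(S-f)=P_x(S)-f\subseteq P_x(S)$. Monotonicity gives $x\notin\cl(P_x(S)-f)$, so again $x\in\OPT(S-f)$.
\end{itemize}
Thus $\OPT(S)-f\subseteq\OPT(S-f)$.

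\emph{Case $e\notin\cl(S)$.} Fix $x\in S$.
\begin{itemize}[nosep]
\item If $x\succ e$, the prefix of $x$ is the same in $S$ and in $S+e$, so its membership is unchanged.
\item If $e\succ x$, the prefix changes from $P=P_x(S)$ to $P+e$. Suppose $x\notin\cl(P)$ but $x\in\cl(P+e)$. Closure exchange gives $e\in\cl(P+x)\subseteq\cl(S)$, a contradiction. The converse direction holds by monotonicity, so membership is again unchanged.
\end{itemize}
Since $e\in\OPT(S+e)$ by hypothesis, we get $\OPT(S+e)=\OPT(S)+e$.

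\emph{Case $e\in\cl(S)$.} Here $\rank(S+e)=\rank(S)$, so $|\OPT(S+e)|=|\OPT(S)|$. Apply the deletion part to the set $S+e$ and its greedy member $e$:
\[
\OPT(S+e)-e\subseteq\OPT(S).
\]
The left side has $|\OPT(S)|-1$ elements, so it equals $\OPT(S)-f$ for exactly one $f\in\OPT(S)$. Therefore $\OPT(S+e)=\OPT(S)-f+e$, and $f$ is unique because it is determined as the single element of $\OPT(S)$ missing from $\OPT(S+e)$.

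The only step needing care is the prefix characterization. Its induction must use that the elements kept among $P_x(S)$ span $P_x(S)$; this holds because every rejected element was spanned by earlier kept ones. I expect this, and the closure-exchange argument in the case $e\notin\cl(S)$, to be the main (mild) obstacle. Everything else follows by counting.
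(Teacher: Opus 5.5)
Your proof is correct. The paper does not prove this lemma: it attributes the fact to Singla and records it without proof, so there is no in-paper argument to compare against, and your prefix characterization $x\in\OPT(S)\iff x\notin\cl(P_x(S))$ gives a clean self-contained route.

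A few points on what the argument uses:
\begin{itemize}
\item The characterization relies on every element of $S$ being scanned. This is where the standing assumption that all weights are positive enters.
\item The same fact gives $|\OPT(S)|=\rank(S)$, which your count in the case $e\in\cl(S)$ needs. In that case the hypothesis $e\in\OPT(S+e)$ also forces $|\OPT(S)|\ge1$, so the element $f$ exists.
\item Your deletion argument never uses $f\in\OPT(S)$, so it in fact shows $\OPT(S)-f\subseteq\OPT(S-f)$ for every $f$. It is just the second clause of Lemma~\ref{lem:inherit} applied to $S-f\subseteq S$, so you could cite that instead.
\item In the case $e\notin\cl(S)$, the hypothesis $e\in\OPT(S+e)$ is automatic, since $P_e(S+e)\subseteq S$.
\end{itemize}

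The usual alternative is the circuit argument. It identifies $f$ explicitly as the last element, in the order $\succ$, of the unique circuit in $\OPT(S)+e$, and uses $e\in\OPT(S+e)$ to see that $f\neq e$. Your reduction of the exchange case to the deletion case plus a cardinality count proves existence and uniqueness of $f$ without locating it, which is all the lemma asserts.
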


We will repeatedly use the following levelwise form of the closure exchange property.  

\begin{lemma}[Levelwise closure exchange]\label{lem:layerexchange}
Let $F=(F_0,\ldots,F_j)$ with $F_0\subseteq\cdots\subseteq F_j$, let $f\in E\setminus F_j$, and let $e\in E\setminus\{f\}$.   For $z\in\{e,f\}$, write
\[
\sigma_F(z)=\min\{k\in\{0,\ldots,j\}:z\in\cl(F_k)\},
\]
with value $\staridx>j$ if no level spans $z$.   Let $F+e=(F_0+e,\ldots,F_j+e)$, and form $F^{i,f}$ from the original $F$ by adding $f$ only to levels $k\ge i$.
If $\sigma_{F+e}(f)<\sigma_F(f)$, then for every $i\in\{0,\ldots,j\}$,
\[
\begin{aligned}
\sigma_{F^{i,f}}(e)&=\min\bigl\{\sigma_F(f),\max\{\sigma_{F+e}(f),i\}\bigr\},\\
\sigma_{F^{i,f}+e}(f)&=\min\{\sigma_{F+e}(f),i\}.
\end{aligned}
\]
\end{lemma}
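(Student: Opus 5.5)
The plan is a direct levelwise computation with two abbreviations, $a=\sigma_F(f)$ and $b=\sigma_{F+e}(f)$. The hypothesis is $b<a$, so $b\le j$, and $a$ may be $\staridx$. Throughout we use only monotonicity of closure along the chain $F_0\subseteq\cdots\subseteq F_j$ and the standard matroid exchange property: if $y\in\cl(S+x)\setminus\cl(S)$, then $x\in\cl(S+y)$.

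\textbf{Step 1: where $e$ and $f$ are spanned in $F$.} We first establish three facts about the original chain.
\begin{enumerate}[label=(\roman*)]
\item $e\notin\cl(F_k)$ for all $k<a$. Suppose instead $e\in\cl(F_k)$ for some $k<a$. Then $e\in\cl(F_{k'})$ for all $k'\ge k$, and hence $\cl(F_{k'}+e)=\cl(F_{k'})$ for those $k'$. Taking $k'=\max\{k,b\}<a$ gives $f\in\cl(F_{k'}+e)=\cl(F_{k'})$, which contradicts $k'<\sigma_F(f)=a$.
\item For $b\le k<a$, we have $f\in\cl(F_k+e)\setminus\cl(F_k)$. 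Exchange then gives $e\in\cl(F_k+f)$.
\item For $k<b$, we have $f\notin\cl(F_k+e)$.
\end{enumerate}
Next suppose $a\le j$. Applying (ii) at $k=a-1\ge b$ gives $e\in\cl(F_{a-1}+f)\subseteq\cl(F_a+f)$. Since $f\in\cl(F_a)$, we have $\cl(F_a+f)=\cl(F_a)$, so $e\in\cl(F_a)$. Together with (i), this shows $\sigma_F(e)=a$ whenever $a$ is finite, and $\sigma_F(e)=\staridx$ when $a=\staridx$. This is the step I expect to be the crux, because it is what makes the stated formula involve $\sigma_F(f)$ rather than $\sigma_F(e)$.

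\textbf{Step 2: compute $\sigma_{F^{i,f}}(e)$.} In $F^{i,f}$, level $k$ is $F_k$ for $k<i$ and $F_k+f$ for $k\ge i$. We check each kind of level.
\begin{itemize}
\item Levels $k<i$: by Step 1, $e$ is spanned exactly when $k\ge a$.
\item Levels $k\ge i$ with $k\ge a$: here $f\in\cl(F_k)$, so the level has the same closure as $F_k$, and $e$ is spanned.
\item Levels $k\ge i$ with $b\le k<a$: $e$ is spanned by (ii).
\item Levels $k\ge i$ with $k<b$: $e$ is not spanned. Otherwise, since $e\notin\cl(F_k)$ by (i), exchange would give $f\in\cl(F_k+e)$, contradicting (iii).
\end{itemize}
So the set of spanning levels is $\{k:k\ge a\}\cup\{k:k\ge\max\{b,i\}\}$. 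Its minimum is $\min\{a,\max\{b,i\}\}$, where the value $\staridx$ is absorbed correctly when $a=\staridx$ because $\max\{b,i\}\le j$.

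\textbf{Step 3: compute $\sigma_{F^{i,f}+e}(f)$.} Levels $k\ge i$ contain $f$, so they trivially span it. A level $k<i$ equals $F_k+e$, which spans $f$ exactly when $k\ge b$, by the definition of $b$ and monotonicity. The minimum spanning level is therefore $\min\{b,i\}$, which completes the plan. The only genuinely nonroutine points are fact (i) and the identification $\sigma_F(e)=a$ from Step 1; everything else is bookkeeping over the index ranges.
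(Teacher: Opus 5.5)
Your proposal is correct and follows essentially the same route as the paper: you first show that $e$ is unspanned at every level below $\sigma_F(f)$ and spanned at level $\sigma_F(f)$, so $\sigma_F(e)=\sigma_F(f)$, and then use the levelwise exchange equivalence to read off both span indices range by range. The differences are cosmetic. You prove fact (i) directly from $\cl(F_{k'}+e)=\cl(F_{k'})$ rather than via the exchange equivalence plus monotonicity, and you place $e$ in $\cl(F_{\sigma_F(f)})$ using level $\sigma_F(f)-1$ rather than level $\sigma_{F+e}(f)$.
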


\begin{proof}
At each level $k$, closure exchange gives
\[
e\in\cl(F_k+f)\setminus\cl(F_k)
\quad\Longleftrightarrow\quad
f\in\cl(F_k+e)\setminus\cl(F_k).
\]
The right-hand side holds exactly when $\sigma_{F+e}(f)\le k<\sigma_F(f)$, so $e\notin\cl(F_k)$ for every such $k$.  By monotonicity of closure, earlier levels do not span $e$ either.  Thus no $F_k$ with $k<\sigma_F(f)$ spans $e$.

If $\sigma_F(f)\le j$, apply closure exchange at $k=\sigma_{F+e}(f)$.  Since $F_{\sigma_{F+e}(f)}\subseteq F_{\sigma_F(f)}$ and $f\in\cl(F_{\sigma_F(f)})$, monotonicity and idempotence of closure give
\[
e\in\cl\bigl(F_{\sigma_{F+e}(f)}+f\bigr)
\subseteq\cl\bigl(F_{\sigma_F(f)}\bigr).
\]
Thus $F_{\sigma_F(f)}$ spans $e$, whereas every earlier level does not.  By the definition of the first spanning level, $\sigma_F(e)=\sigma_F(f)$.
If $\sigma_F(f)=\staridx$, every actual level $k\le j$ satisfies $k<\sigma_F(f)$, so no level of $F$ spans $e$.  The definition then gives again $\sigma_F(e)=\sigma_F(f)$.

For the first identity, levels below $i$ are unchanged in $F^{i,f}$.  At levels $k\ge i$, adding $f$ newly spans $e$ exactly when $\sigma_{F+e}(f)\le k<\sigma_F(f)$, by the equivalence above.  The earliest such level is $\max\{\sigma_{F+e}(f),i\}$ if it precedes $\sigma_F(f)$.  Otherwise, $e$ retains its original span index $\sigma_F(f)$.  This proves the stated minimum formula.

For the second identity, the sets at levels $k<i$ in $F^{i,f}+e$ are $F_k+e$, which span $f$ exactly when $k\ge\sigma_{F+e}(f)$.  Every level $k\ge i$ contains $f$ itself.  Hence the first level spanning $f$ is $\min\{\sigma_{F+e}(f),i\}$.
\end{proof}

\section{Reference Hierarchy}\label{sec:multilevel}

Dynamic Thinning uses a single random reference set $X$.  Conditional on $X$, the elements of $\OPT(X)$ have been accepted independently with a common time-dependent probability.  Our multilevel algorithm extends this conditional-thinning description to a nested hierarchy of reference sets.  The additional levels allow rejected elements to return to the reference hierarchy.

Fix an integer $m\ge1$ and a deterministic time schedule
\[
0<T_0=p<T_1<\cdots<T_{m-1}<T_m=1.
\]
Here $T_0=p$ ends the initial sampling period, $T_1,\ldots,T_{m-1}$ are the \emph{refresh times}, and $T_m=1$ ends the execution. We assume $T_0=p\ge \frac1e$ throughout the paper.  There are $m$ stages.  Stage $j$ is the interval $[T_j,T_{j+1})$ for $0\le j<m$.  At the beginning of stage $j$, the algorithm has the nested reference sets
\[
X_0\subseteq X_1\subseteq\cdots\subseteq X_j.
\]
At each refresh time $T_j$ with $1\le j<m$, the algorithm keeps the existing reference sets and adds a new outer set $X_j$ containing all elements that have arrived by $T_j$.  The innermost set $X_0$ plays the role of the single reference set in Dynamic Thinning.

We call an element \emph{active} if it is currently stored in at least one reference set.  During stage $j$, these are precisely the elements of $X_j$, because the reference sets are nested.  The \emph{level} of an active element $f$ is the smallest $i$ such that $f\in X_i$.  Thus a level-$i$ element belongs to $X_i,X_{i+1},\ldots,X_j$, but not to the earlier reference sets.  The level-$0$ elements form $X_0$.
Let $A$ denote the set of elements accepted online. 

Our design goal is to maintain a conditional distribution of $A$ given the reference hierarchy.  At time $t$ in stage $j$, we require
\[
A\subseteq\OPT(X_0),
\]
and, conditional on $X_0,\ldots,X_j$, the events $\{f\in A\}$ for $f\in\OPT(X_0)$ should be mutually independent.  Unlike in the single-reference-set algorithm, their probabilities need not be equal: the target \emph{conditional acceptance probability} $\Pr[f\in A\mid X_0,\ldots,X_j]$ depends on the first reference level at which $f$ leaves the greedy set.
In particular, if 
\[
f\in\OPT(X_0)\cap\cdots\cap\OPT(X_{i-1})
\quad\text{but}\quad
f\notin\OPT(X_i),
\]
the target probability is $g(T_i)/p$, where $g$ is a function defined in \eqref{eq:qgdef}.  If $f$ remains in $\OPT(X_j)$, the target probability is $g(t)/p$.
These are probabilities of having already been accepted, conditional on the current reference sets, rather than probabilities used directly at an element's arrival.  Our task is to design online updates of the reference sets that maintain this conditional distribution of $A$ while leaving all previous acceptance and rejection decisions unchanged.

To define $g$ from the refresh schedule, let $b(t)$ record the beginning of the current stage:
\[
b(t)=T_j\qquad\text{for }T_j\le t<T_{j+1}.
\]
Then define
\begin{equation}\label{eq:qgdef}
q(t)=\exp\left(-\int_p^t\frac{ds}{b(s)}\right),
\qquad
g(t)=\int_p^t q(s)\,ds.
\end{equation}
They satisfy
\begin{equation}\label{eq:qgdiff}
q'(t)=-\frac{q(t)}{T_j},
\qquad
g'(t)=q(t)
\qquad (T_j<t<T_{j+1}).
\end{equation}
In particular, within a stage,
\begin{equation}\label{eq:qgstage}
q(t)=q(T_j)e^{-(t-T_j)/T_j},
\qquad
g(t)=g(T_j)+T_jq(T_j)\bigl(1-e^{-(t-T_j)/T_j}\bigr).
\end{equation}

The restriction $p\ge1/e$ ensures that all quantities used later as probabilities or rates are nonnegative.

\begin{lemma}[Validity of the schedule]\label{lem:schedulevalid}
For every $t\in(p,1]$,
\[
0<q(t)\le \frac pt,
\qquad
0\le g(t)<p.
\]
\end{lemma}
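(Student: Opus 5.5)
We prove the two bounds on $q$ first and then deduce those on $g$ by integration. Throughout, $b(s)$ is a step function with $p=T_0\le b(s)\le s$ for $s\in[p,1)$, since $b(s)=T_j\le s$ on stage $j$. The integrand $1/b(s)$ is therefore bounded by $1/p$ and integrable, so $q(t)=\exp(-\int_p^t ds/b(s))$ is finite and strictly positive for every $t\in(p,1]$. For the upper bound, the inequality $b(s)\le s$ gives $1/b(s)\ge 1/s$, hence
\[
\int_p^t\frac{ds}{b(s)}\ \ge\ \int_p^t\frac{ds}{s}=\ln\frac tp,
\]
and exponentiating yields $q(t)\le p/t$. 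This settles $0<q(t)\le p/t$.

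For $g$, nonnegativity is immediate: $g(t)=\int_p^t q(s)\,ds$ with $q>0$ and $t>p$, so in fact $g(t)>0$ for $t>p$. For the strict upper bound, we integrate the bound on $q$:
\[
g(t)\le\int_p^t\frac ps\,ds=p\ln\frac tp\le p\ln\frac1p.
\]
The assumption $p\ge 1/e$ gives $\ln(1/p)\le 1$, so $g(t)\le p$. To upgrade this to strictness, we need strictness somewhere, and we get it from the first stage. On $(T_0,T_1)$ we have $b(s)=p<s$, so $1/b(s)>1/s$ on a set of positive measure. This makes $q(s)<p/s$ for all $s>p$, as the integral inequality is strict once $s>p$. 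Since $t>p$, the interval $(p,t)$ is nonempty and $q(s)<p/s$ holds on all of it. Hence $g(t)<p\ln(t/p)\le p\ln(1/p)\le p$.

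The only real obstacle is this strictness step. The endpoint case $p=1/e$, $t=1$ is exactly where the non-strict chain $g(t)\le p\ln(t/p)\le p\ln(1/p)\le p$ becomes tight. So the strict inequality must come from the gap between $b(s)$ and $s$ inside the stages, not from the choice of $p$; for this we use $T_0=p<T_1$ so that the first stage has positive length. As a sanity check, one can also use the explicit formula \eqref{eq:qgstage}, which shows that $q(T_j)$ and $g(T_j)$ evolve stagewise with $q(t)<p/t$ strictly past $p$. The integral argument above suffices, though.
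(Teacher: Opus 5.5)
Your proof is correct and takes essentially the same approach as the paper: you bound $q$ using $b(s)\le s$, integrate to get $g(t)\le p\log(t/p)\le p\log(1/p)\le p$, and get strictness from the gap $b(s)<s$ off the stage endpoints. The only differences are minor: you locate the gap on the first stage $(T_0,T_1)$, and you spell out why $q>0$, which the paper leaves implicit.
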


\begin{proof}
Since $b(s)\le s$,
\[
q(t)=\exp\left(-\int_p^t\frac{ds}{b(s)}\right)
\le
\exp\left(-\int_p^t\frac{ds}{s}\right)
=\frac pt.
\]
It follows that
\[
g(t)\le p\log\frac tp\le p\log\frac1p\le p.
\]
The first inequality is strict after integration whenever $t>p$, because the finite step function $b(s)$ is strictly smaller than $s$ except at the stage endpoints.  Hence $g(t)<p$.
\end{proof}

We also write
\[
q_i=q(T_i),\quad g_i=g(T_i)\quad(0\le i\le m),\qquad d_i=T_i-T_{i-1}\quad(1\le i\le m),
\]
and set $d_0=p$.

Fix a current time $t\in[T_j,T_{j+1})$ in stage $j$.  Use the ordered indices
\[
0<1<\cdots<j<\staridx.
\]
and set
\[
q_{\staridx}=q(t),\qquad g_{\staridx}=g(t).
\]
For a nested family $F=(F_0,\ldots,F_j)$ and an element $z\in E$, the convention $\sigma_F(z)=\staridx$ means $z\notin\cl(F_j)$, as in Lemma~\ref{lem:layerexchange}.  The values $q_{\staridx}$ and $g_{\staridx}$ are therefore evaluated at the current time $t$ whenever this span index occurs.

The definitions of $q$ and $g$ above imply the following identity involving $q_i$, $g_i$, and $d_i$.  We will use this identity to define valid probabilities for updating the reference sets and to prove that the resulting online algorithm maintains the target conditional distribution of $A$.

\begin{lemma}[Finite-level balance]\label{lem:scalarbalance}
Fix a stage $j$ and a time $t\in[T_j,T_{j+1})$.  For any $a,b\in\{0,\ldots,j,\staridx\}$ with $a>b$,
\begin{equation}\label{eq:balanceidentity}
(p-g_a)q_b+
\sum_{\substack{1\le i<a\\i\le j}}
 d_i\bigl(q_{\max\{b,i\}}-q_a\bigr)q_{\min\{b,i\}}
=q_a(p-g_b).
\end{equation}
\end{lemma}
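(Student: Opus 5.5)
The plan is a direct computation. I will reduce both sums in \eqref{eq:balanceidentity} to closed forms by summation by parts, using only the stagewise formula \eqref{eq:qgstage}, and then check that the remaining expression vanishes.

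\textbf{Conventions.} Write $a^-$ for the predecessor of $a$ in the ordered index set, so $a^-=a-1$ if $a\le j$ and $a^-=j$ if $a=\staridx$. Then the range $\{1\le i<a,\ i\le j\}$ is exactly $\{1,\ldots,a^-\}$. From \eqref{eq:qgstage}, and since $q_0=q(p)=1$ and $g_0=0$, the single increment relation I need is
\[
g_c-g_{c^-}=T_{c^-}\bigl(q_{c^-}-q_c\bigr)\qquad\text{for every index } c\ge1.
\]
For $c=\staridx$ this reads $g(t)-g_j=T_j(q_j-q(t))$, which holds because $t\in[T_j,T_{j+1})$; for $c\le j$ it is the case $t=T_c$ of \eqref{eq:qgstage} applied to stage $c-1$.

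\textbf{Splitting the sum.} I split at $i=b$. For $i\le b$ we have $\max\{b,i\}=b$ and $\min\{b,i\}=i$, so these terms contribute $(q_b-q_a)\sum_{i=1}^{b}d_iq_i$. For $b<i\le a^-$ the roles swap, and these terms contribute
\[
q_b\sum_{i=b+1}^{a^-}d_iq_i\;-\;q_aq_b\bigl(T_{a^-}-T_b\bigr).
\]
Here $b\le j$ because $b<a$, so $T_b$ is an actual refresh time.

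\textbf{Summation by parts.} Write $d_iq_i=T_iq_i-T_{i-1}q_{i-1}+T_{i-1}(q_{i-1}-q_i)$ and telescope, using the increment relation. This gives, for $0\le c\le c'\le j$,
\[
\sum_{i=c+1}^{c'}d_iq_i=\bigl(T_{c'}q_{c'}+g_{c'}\bigr)-\bigl(T_cq_c+g_c\bigr),
\]
and in particular $\sum_{i=1}^{b}d_iq_i=T_bq_b+g_b-p$ since $T_0q_0+g_0=p$.

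\textbf{Verification.} Substitute these closed forms, then eliminate $g_a$ via $g_a=g_{a^-}+T_{a^-}(q_{a^-}-q_a)$. The left side of \eqref{eq:balanceidentity} becomes a polynomial in $q_a,q_b,q_{a^-},g_b,g_{a^-},T_b,T_{a^-},p$.
\begin{itemize}[nosep]
\item The $g_{a^-}q_b$ terms cancel.
\item The $T_{a^-}q_{a^-}q_b$ terms cancel.
\item The $T_{a^-}q_aq_b$ terms cancel.
\item The $T_bq_b^2$ and $T_bq_aq_b$ terms cancel.
\item The $pq_b$ terms cancel.
\item What remains is $pq_a-g_bq_a=q_a(p-g_b)$.
\end{itemize}
Carrying out this cancellation carefully is the only real work.

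\textbf{Main obstacle.} The main obstacle is index bookkeeping rather than any conceptual step. I must make sure the formulas hold uniformly in the edge cases:
\begin{itemize}[nosep]
\item $b=0$, where the first sum is empty and $T_0q_0+g_0=p$ handles it;
\item $a=b+1$, where the second sum is empty and $T_{a^-}=T_b$;
\item $a=\staridx$, where $q_a=q(t)$, $g_a=g(t)$, and $a^-=j$.
\end{itemize}
All three are covered by the single increment relation and the telescoping formula, so I will state those two facts first and then do the algebra once for general $a>b$.
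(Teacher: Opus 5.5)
Your proof is correct and is essentially the paper's argument. Your telescoped formula for $\sum d_iq_i$ is the paper's auxiliary identity $p-g_k+L_k=\beta_kq_k$, which the paper proves instead by differentiating within stages and checking the jumps at refresh times; both proofs then split the sum at $b$ and cancel. The only differences are cosmetic: the paper includes the vanishing $i=a$ term so it can apply that identity at $a$ directly, whereas you stop at $a^-$ and eliminate $g_a$ with the increment relation, and your cancellation list also omits the $g_bq_b$ terms, which do cancel.
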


\begin{proof}
For $k\in\{0,1,\ldots,j\}$, define
\[
L_k=\sum_{i=1}^k d_iq_i,
\]
and set $L_{\staridx}=L_j$.  Also define $\beta_k=T_k$ for $k\le j$ and $\beta_{\staridx}=T_j$.  We first establish the auxiliary identity
\[
p-g_k+L_k=\beta_kq_k
\qquad(k\in\{0,\ldots,j,\staridx\}).
\]
At $t=p$, both sides equal $p$.  Within a stage, only the identity for $k=\staridx$ varies with time, and
\[
\frac{d}{dt}(p-g(t))=-q(t)=\frac{d}{dt}\bigl(T_jq(t)\bigr).
\]
At a refresh time $T_i$, the left side gains the term $d_iq_i$, while the right side changes from $T_{i-1}q_i$ to $T_iq_i$.  Thus the auxiliary identity holds throughout the schedule, including at each refresh time where the values for a finite index $k$ are defined.

For~\eqref{eq:balanceidentity}, split the sum at $b$.  Since the term $i=a$ would be zero, the sum can be written as
\[
q_bL_a-q_aL_b-q_aq_b(\beta_a-\beta_b).
\]
Subtract the right side of~\eqref{eq:balanceidentity} from the left side and substitute this expression.  The result is
\[
q_b(p-g_a+L_a)-q_a(p-g_b+L_b)-q_aq_b(\beta_a-\beta_b),
\]
which is zero after substituting $p-g_a+L_a=\beta_aq_a$ and $p-g_b+L_b=\beta_bq_b$.
\end{proof}


\section{Updating the Reference Sets}\label{sec:local}

We define two local randomized procedures for updating reference sets when an element is inserted/added to or removed from all current reference sets.  These procedures adjust the levels of other elements without changing their acceptance status.  The insertion or removal of the element itself is carried out by the online algorithm in Section~\ref{sec:algorithm}.  

The design of the insertion and removal rules in Sections~\ref{sec:insertion-scan} and~\ref{sec:removal-scan}, including the method of pairing their recursive choices to maintain the joint distribution, is taken directly from the forward and reverse kernel construction by Huang~\cite{huang2026}.  We express this pairing in terms of finite level assignments and their weights. Lemma~\ref{lem:localflow} gives the corresponding probability-flow identity.

\subsection{Backgrounds, span indices, and local weights}

Fix a stage $j$ and a current time $t$.  Each local procedure processes elements in the fixed decreasing weight-and-label order.  A \emph{background} is a nested sequence
\[
F=(F_0,\ldots,F_j),
\qquad F_0\subseteq\cdots\subseteq F_j,
\]
where, when processing an element $f$, $F_i$ consists of the previously processed elements that precede $f$ in this order and belong to the $i$th reference set in the configuration considered by the current recursive call.  Here, processing refers to the local procedure, not to online arrival.  For an element $z$ and an index $i\in\{0,\ldots,j\}$, let $F^{i,z}$ be the background obtained by adding $z$ to $F_i,F_{i+1},\ldots,F_j$ and leaving the earlier sets unchanged.  Write $F+z=F^{0,z}$.

Let $e$ be the element being inserted or removed, and let $L\subseteq E\setminus F_j$ be a finite set of elements not yet processed by the local procedure, all following $e$ in the fixed weight-and-label order.  A \emph{level assignment} $c=(c_f)_{f\in L}$ specifies a possible level $c_f\in\{0,\ldots,j\}$ for each element $f\in L$.  To apply a level assignment $r$ to the reference sets means to adjust their membership so that, for every element $f$ in the domain of $r$ and every $0\le i\le j$,
\[
f\in X_i \quad\Longleftrightarrow\quad r_f\le i,
\]
while leaving the membership of all elements outside the domain of $r$ unchanged.  Thus applying $r$ may change the inner reference sets $X_0,\ldots,X_{j-1}$ even though it does not change whether an element in the domain of $r$ belongs to $X_j$.  Write $A_L=A\cap L$ for the accepted elements in $L$.  We will restrict that $A_L$ is valid only if every element of $A_L$ has level $0$ and belongs to the resulting level-$0$ greedy set.

The local weight $W_F(L,c,A_L)$ assigns a nonnegative weight to the level assignment $c$ and the accepted restriction $A_L=A\cap L$, given the background $F$.  It is defined recursively by processing the elements of $L$ in decreasing weight-and-label order.

For the empty set, define
\[
W_F(\varnothing,\varnothing,\varnothing)=1.
\]
For nonempty $L$, let $f$ be its largest element.  Write $c|_{L-f}$ for the restriction of the level assignment to the remaining elements.  Define
\begin{equation}\label{eq:localweight}
W_F(L,c,A_L)=
\begin{cases}
g_{\sigma_F(f)}\,W_{F+f}(L-f,c|_{L-f},A_L\setminus\{f\}),
  &c_f=0,\ f\in A_L,\\[3pt]
\bigl(p-g_{\sigma_F(f)}\bigr)\,W_{F+f}(L-f,c|_{L-f},A_L\setminus\{f\}),
  &c_f=0,\ f\notin A_L,\\[3pt]
d_i\,W_{F^{i,f}}(L-f,c|_{L-f},A_L\setminus\{f\}),
  &c_f=i\ge1,\ f\notin A_L,\\[3pt]
0,
  &c_f\ge1,\ f\in A_L.
\end{cases}
\end{equation}
In the first two cases, $f$ has level $0$, so it is added to every background set.  Its factor is $g_{\sigma_F(f)}$ if accepted and $p-g_{\sigma_F(f)}$ otherwise.  In the third case, $f$ is unaccepted and has level $i\ge1$, so its factor is $d_i$ and it is added only to $F_i,\ldots,F_j$.  Every recursive call removes $f$ from both $L$ and the accepted subset passed to that call.  In the second and third cases, $f\notin A_L$, so $A_L\setminus\{f\}=A_L$.  This removal occurs only in the temporary arguments of the weight calculation and is not an online rejection.

The last case assigns weight zero to an accepted element outside level $0$.  If an accepted level-$0$ element is spanned by $F_0$, then $\sigma_F(f)=0$ and its factor is also zero because $g_0=0$.  Thus any configuration violating the acceptance conditions has weight zero.  Each nonzero recursive call replaces $L$ by $L-f$, so $|L|$ strictly decreases and the recursion reaches the empty-set case after finitely many steps. 

Summing these weights over all level assignments and accepted subsets gives the following normalization identity.

\begin{lemma}[Local normalization]\label{lem:localnorm}
For every background $F$ and every finite $L\subseteq E\setminus F_j$,
\[
\sum_{c,A_L}W_F(L,c,A_L)=T_j^{|L|},
\]
where the sum ranges over all level assignments and accepted subsets.
\end{lemma}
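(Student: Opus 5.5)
Induction on $|L|$, using the recursive definition~\eqref{eq:localweight}. The base case $L=\varnothing$ holds because $W_F(\varnothing,\varnothing,\varnothing)=1=T_j^0$. For nonempty $L$, let $f$ be its largest element. Every term in the sum is determined by the level $c_f$, by whether $f\in A_L$, and by the remaining data $(c|_{L-f},A_L\setminus\{f\})$. Here $c|_{L-f}$ ranges over all level assignments of $L-f$, and $A_L\setminus\{f\}$ ranges over all subsets of $L-f$. So we split the sum according to the first three cases of~\eqref{eq:localweight}, and the fourth case contributes nothing. In each case the remaining sum is $\sum_{c',A'}W_{G}(L-f,c',A')$ for the appropriate background $G$.

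To apply the induction hypothesis, we check that each $G$ is a valid background with $L-f\subseteq E\setminus G_j$. If $c_f=0$, then $G=F+f$. If $c_f=i\ge1$, then $G=F^{i,f}$. Both are nested families, and since $f\notin L-f$ and $L\cap F_j=\varnothing$, the set $L-f$ is disjoint from $G_j=F_j+f$. The hypothesis is uniform in the background, so each inner sum equals $T_j^{|L|-1}$, independent of $G$.

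Collecting the four cases gives
\[
\sum_{c,A_L}W_F(L,c,A_L)=\Bigl(g_{\sigma_F(f)}+\bigl(p-g_{\sigma_F(f)}\bigr)+\sum_{i=1}^{j}d_i\Bigr)T_j^{|L|-1}.
\]
The first two coefficients sum to $p=d_0$, whatever the value of $\sigma_F(f)$, and the $d_i$ telescope to $p+\sum_{i=1}^j(T_i-T_{i-1})=T_j$. This yields $T_j^{|L|}$.

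The point needing care is that the accepted/not-accepted split at level $0$ removes all dependence on the span index $\sigma_F(f)$. Once that is seen, the background-specific quantities disappear and the induction closes. Lemma~\ref{lem:schedulevalid} plays no role in the identity itself, since it only guarantees that the weights are nonnegative. We also note that the sum over $c_f$ runs over $\{0,\dots,j\}$, which matches the telescoping exactly.
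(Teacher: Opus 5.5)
Your proposal is correct and follows the paper's proof essentially step for step. Both argue by induction on $|L|$, split on the level and acceptance status of the largest element $f$, apply the background-uniform induction hypothesis to get $T_j^{|L|-1}$, and sum the coefficients $g_{\sigma_F(f)}+(p-g_{\sigma_F(f)})+\sum_{i=1}^j d_i=T_j$; your explicit check that $F+f$ and $F^{i,f}$ are nested backgrounds with $L-f$ disjoint from $F_j+f$ is a detail the paper leaves implicit.
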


\begin{proof}
We proceed by induction on $|L|$.  The claim is immediate for $L=\varnothing$.  Suppose that $L\neq\varnothing$, and let $f$ be its largest element.  Write $\sigma=\sigma_F(f)$.

For each possible level and acceptance status of $f$, the recursive call is made with $L-f$ and some updated background $F'$.  By the induction hypothesis, which holds for every background,
\[
\sum_{c',A'}W_{F'}(L-f,c',A')=T_j^{|L|-1}.
\]
Hence, summing first over the level and acceptance status of $f$, we obtain
\[
\begin{aligned}
\sum_{c,A_L}W_F(L,c,A_L)
&=
\left(
g_\sigma+(p-g_\sigma)+\sum_{i=1}^j d_i
\right)T_j^{|L|-1}\\
&=
\left(
p+\sum_{i=1}^j(T_i-T_{i-1})
\right)T_j^{|L|-1}\\
&=T_j^{|L|}.
\end{aligned}\qedhere
\]
\end{proof}

\subsection{Insertion scan}\label{sec:insertion-scan}

The procedure $\Insert(L,F,e,c,A_L)$ randomly constructs the level changes needed for a proposed addition of $e$ to all current reference sets $X_0,\ldots,X_j$.  Its output is either failure or a new level assignment for $L$.  The online algorithm calls this procedure when $e$ arrives.  If the call fails, it rejects $e$ and leaves the reference sets unchanged.  If the call succeeds, it updates the levels of the elements in $L$ as returned by the procedure, adds $e$ to all current reference sets, and accepts $e$.  The acceptance status of every element already in $L$ remains unchanged.  The procedure also calls itself to work out the level changes for smaller elements. 
Algorithm~\ref{alg:insert} gives the complete recursive definition.

Algorithm~\ref{alg:insert} may return before processing an element of $L$ in two situations.  If $\sigma_F(e)=0$ (lines~\ref{line:insert-spanned-check}--\ref{line:insert-spanned-return}), the innermost background already spans $e$, so adding $e$ does not require any level change.  If $L=\varnothing$ (lines~\ref{line:insert-empty-check}--\ref{line:insert-empty-return}), no smaller element remains, and the procedure succeeds with probability $q_{\sigma_F(e)}$.

Suppose $\sigma_F(e)\ne0$ and $L\ne\varnothing$, so the procedure processes the largest element $f$ of $L$.  When a recursive call returns an assignment on $L-f$, extending it by $c_f=i$ means assigning level $i$ to $f$ and leaving all returned levels unchanged.

Here $a=\sigma_F(f)$ is the first level whose current background spans $f$, while $b=\sigma_{F+e}(f)$ is the corresponding level after adding $e$.  Monotonicity gives $a\ge b$.  If $a=b$, the level of $f$ is unchanged and the procedure continues with the smaller elements.  Suppose $a>b$.  In (I1), $f$ is already accepted at level $0$, so the coin with success probability $\frac{q_ag_b}{q_bg_a}$ prevents the update from assigning too much probability to this accepted configuration.  In (I2), $f$ is unaccepted at level $0$ and can remain there without this additional coin.

In (I3), $f$ is an unaccepted element at a later level $i$.  Lemma~\ref{lem:layerexchange} gives
\[
s=\sigma_{F^{i,f}}(e)=\min\{a,\max\{b,i\}\}.
\]
With probability $q_a/q_s$, the procedure keeps $f$ at level $i$.  Otherwise it first inserts $e$ with $f$ still at level $i$, then inserts $f$ again to move it from level $i$ to level $0$.  We call this alternative the \emph{recycling step}.  The recursive calls may also change the levels of smaller elements, but they leave the acceptance status of $f$ and all those elements unchanged.  Since $f$ was already rejected online, it remains outside the accepted set.

\begin{algorithm}[H]
\caption{Insertion procedure $\Insert(L,F,e,c,A_L)$}
\label{alg:insert}
\begin{algorithmic}[1]
\Require A background $F$, an element $e$ whose addition is being considered, the remaining set $L$, its level assignment $c$, and its accepted subset $A_L$
\Ensure Either failure or a new level assignment for $L$
\If{$\sigma_F(e)=0$}\label{line:insert-spanned-check}
  \State \Return $c$\label{line:insert-spanned-return}
\EndIf
\If{$L=\varnothing$}\label{line:insert-empty-check}
  \State \Return the empty assignment with probability $q_{\sigma_F(e)}$, and failure otherwise\label{line:insert-empty-return}
\EndIf
\State Let $f$ be the largest element of $L$
\State $a\gets\sigma_F(f)$, \quad $b\gets\sigma_{F+e}(f)$
\If{$a=b$}
  \State $F'\gets F+f$ if $c_f=0$, and $F'\gets F^{c_f,f}$ otherwise
  \State $r\gets\Insert(L-f,F',e,c|_{L-f},A_L\setminus\{f\})$
  \State \Return failure if $r$ is failure, and otherwise return $r$ extended by $c_f$
\EndIf
\If{$c_f=0$ and $f\in A_L$} \Comment{(I1)}
  \State With probability $1-\frac{q_ag_b}{q_bg_a}$, \Return failure
  \State $r\gets\Insert(L-f,F+f,e,c|_{L-f},A_L\setminus\{f\})$
  \State \Return failure if $r$ is failure, and otherwise return $r$ extended by $c_f=0$
\ElsIf{$c_f=0$ and $f\notin A_L$} \Comment{(I2)}
  \State $r\gets\Insert(L-f,F+f,e,c|_{L-f},A_L\setminus\{f\})$
  \State \Return failure if $r$ is failure, and otherwise return $r$ extended by $c_f=0$
\Else \Comment{(I3): $c_f=i\ge1$}
  \State $i\gets c_f$
  \State $s\gets\sigma_{F^{i,f}}(e)=\min\{a,\max\{b,i\}\}$
  \If{a coin with success probability $q_a/q_s$ returns success}
    \State $r\gets\Insert(L-f,F^{i,f},e,c|_{L-f},A_L\setminus\{f\})$
    \State \Return failure if $r$ is failure, and otherwise return $r$ extended by $c_f=i$
  \Else
    \State $r_1\gets\Insert(L-f,F^{i,f},e,c|_{L-f},A_L\setminus\{f\})$
    \If{$r_1$ is failure}
      \State \Return failure
    \EndIf
    \State $r_2\gets\Insert(L-f,F^{i,f}+e,f,r_1,A_L\setminus\{f\})$
    \State \Return failure if $r_2$ is failure, and otherwise return $r_2$ extended by $c_f=0$
  \EndIf
\EndIf
\end{algorithmic}
\end{algorithm}

\subsection{Removal scan}\label{sec:removal-scan}

The procedure $\Remove(L,F,e,c',A_L)$ is a randomized update used when $e$ is removed from all current reference sets.  Given the current assignment $c'$, it samples a new level assignment for $L$ so that the required distribution is preserved after the removal of $e$. The sampling probabilities of $\Remove$ choose among these possible assignments and sum to one, so the procedure never fails.  Its output need not equal the level assignment that occurred at some earlier time.

When the online algorithm calls $\Remove$, its input $e$ is an unaccepted reference element selected by the rule in Section~\ref{sec:deletions}.  After the procedure returns, the online algorithm adjusts the reference sets according to the returned level assignment, and removes $e$ from all current reference sets.  During the recursive computation, $\Remove$ may move a previously rejected element from level $0$ back to a later level $i$.

Algorithm~\ref{alg:remove} gives the complete recursive definition.  The procedure first checks whether $\sigma_F(e)=0$ or $L=\varnothing$.  In either case, it returns the current assignment $c'$ unchanged: in the first case, the innermost background already spans $e$, so removing $e$ requires no level changes; in the second, no smaller element remains to be processed.

Otherwise, the procedure processes the largest element $f$ of $L$ (line~\ref{line:remove-largest}).  Let $a=\sigma_F(f)$ and $b=\sigma_{F+e}(f)$, as in line~\ref{line:remove-indices}.  These span indices have the same meanings as in Algorithm~\ref{alg:insert}, and monotonicity gives $a\ge b$.  Extending an assignment on $L-f$ by $c_f=i$ assigns level $i$ to $f$ and leaves the returned levels unchanged.

If $a=b$, adding or removing $e$ does not change the relevant span index of $f$, so $\Remove$ keeps the level of $f$ and continues recursively (lines~\ref{line:remove-equal-check}--\ref{line:remove-equal-return}).

Now suppose $a>b$.  If $f$ is an accepted level-$0$ element, it remains at level $0$, and the procedure recursively removes $e$ from the remaining elements with background $F+f$ (lines~\ref{line:remove-accepted-check}--\ref{line:remove-accepted-return}).  If $f$ is currently at a later level $i$, it remains at level $i$, and the procedure recurses with background $F^{i,f}$ (lines~\ref{line:remove-later-check}--\ref{line:remove-later-return}).  In each of these cases, only one output level for $f$ is possible, so no random choice of its level is needed.

It remains to consider the case in which $a>b$ and $f$ is currently unaccepted at level $0$.  Only in this case may several output levels be possible.  The procedure samples its output level $\ell$ from $\{0\}\cup K_a$, where
\[
K_a=\{i\in\{1,\ldots,j\}:i<a\}.
\]
Define
\begin{equation}\label{eq:eta}
\eta=
\frac{(p-g_a)q_b}{q_a(p-g_b)}.
\end{equation}
For each $i\in K_a$, define
\begin{equation}\label{eq:pii}
\pi_i=
\frac{d_i\bigl(q_{\max\{b,i\}}-q_a\bigr)q_{\min\{b,i\}}}
     {q_a(p-g_b)}.
\end{equation}
Lemma~\ref{lem:schedulevalid} shows that these quantities are nonnegative, and Lemma~\ref{lem:scalarbalance} gives
\[
\eta+\sum_{i\in K_a}\pi_i=1,
\]
so they form a probability distribution. The procedure draws $\ell$ with $\Pr[\ell=0]=\eta$ and $\Pr[\ell=i]=\pi_i$ for $i\in K_a$ (line~\ref{line:remove-draw}).
If $\ell=0$, the procedure keeps $f$ at level $0$ and recursively removes $e$ (lines~\ref{line:remove-zero-check}--\ref{line:remove-zero-return}).  If $\ell=i\in K_a$, the procedure first recursively removes $f$ with background $F^{i,f}+e$, then recursively removes $e$ with background $F^{i,f}$, and returns the resulting assignment extended by $c_f=i$ (lines~\ref{line:remove-recycle-level}--\ref{line:remove-recycle-return}).

\begin{algorithm}[H]
\caption{Removal procedure $\Remove(L,F,e,c',A_L)$}
\label{alg:remove}
\begin{algorithmic}[1]
\Require A background $F$, an element $e$ to be removed, the remaining set $L$, its current level assignment $c'$, and its accepted subset $A_L$
\Ensure A new level assignment for $L$ to use after $e$ is removed
\If{$\sigma_F(e)=0$ or $L=\varnothing$}
  \State \Return $c'$
\EndIf
\State Let $f$ be the largest element of $L$\label{line:remove-largest}
\State $a\gets\sigma_F(f)$, \quad $b\gets\sigma_{F+e}(f)$\label{line:remove-indices}
\If{$a=b$}\label{line:remove-equal-check}
  \State $h\gets c'_f$
  \State $F'\gets F+f$ if $h=0$, and $F'\gets F^{h,f}$ otherwise
  \State $r\gets\Remove(L-f,F',e,c'|_{L-f},A_L\setminus\{f\})$
  \State \Return $r$ extended by $c_f=h$\label{line:remove-equal-return}
\EndIf
\If{$c'_f=0$ and $f\in A_L$}\label{line:remove-accepted-check}
  \State $r\gets\Remove(L-f,F+f,e,c'|_{L-f},A_L\setminus\{f\})$
  \State \Return $r$ extended by $c_f=0$\label{line:remove-accepted-return}
\ElsIf{$c'_f\ge1$}\label{line:remove-later-check}
  \State $i\gets c'_f$
  \State $r\gets\Remove(L-f,F^{i,f},e,c'|_{L-f},A_L\setminus\{f\})$
  \State \Return $r$ extended by $c_f=i$\label{line:remove-later-return}
\Else \Comment{$c'_f=0$ and $f\notin A_L$}
  \State Draw $\ell$ with $\Pr[\ell=0]=\eta$ and $\Pr[\ell=i]=\pi_i$ for $i\in K_a$\label{line:remove-draw}
  \If{$\ell=0$}\label{line:remove-zero-check}
    \State $r\gets\Remove(L-f,F+f,e,c'|_{L-f},A_L\setminus\{f\})$
    \State \Return $r$ extended by $c_f=0$\label{line:remove-zero-return}
  \Else
    \State $i\gets\ell$\label{line:remove-recycle-level}
    \State $r_1\gets\Remove(L-f,F^{i,f}+e,f,c'|_{L-f},A_L\setminus\{f\})$
    \State $r_2\gets\Remove(L-f,F^{i,f},e,r_1,A_L\setminus\{f\})$
    \State \Return $r_2$ extended by $c_f=i$\label{line:remove-recycle-return}
  \EndIf
\EndIf
\end{algorithmic}
\end{algorithm}

\subsection{Local probability-flow identity}\label{sec:local-flow}

The next lemma is the main technical component.  It pairs every successful sequence of recursive choices made by $\Insert$ with a sequence of choices made by $\Remove$ having the required probability.  In this identity, an output $c$ of $\Remove$ is interpreted as a possible input to $\Insert$. 

\begin{lemma}[Local flow identity]\label{lem:localflow}
Fix a background $F$, an element $e$ to be inserted, a set $L$ of elements following $e$ in the fixed weight-and-label order, and an accepted subset $A_L$.  Let
\(
I_F(c,c')
\)
be the probability that $\Insert(L,F,e,c,A_L)$ succeeds with output level assignment $c'$, and let
\(
D_F(c',c)
\)
be the probability that $\Remove(L,F,e,c',A_L)$ outputs $c$.  Then
\begin{equation}\label{eq:detailedflow}
W_F(L,c,A_L)I_F(c,c')
=
q_{\sigma_F(e)}
W_{F+e}(L,c',A_L)D_F(c',c).
\end{equation}
Consequently,
\begin{equation}\label{eq:columnflow}
\sum_c W_F(L,c,A_L)I_F(c,c')
=
q_{\sigma_F(e)}W_{F+e}(L,c',A_L).
\end{equation}
\end{lemma}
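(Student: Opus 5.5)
We prove \eqref{eq:detailedflow} by strong induction on $|L|$, simultaneously for all backgrounds $F$, all inserted elements $e$ (preceding $L$ in the order), and all assignments $c,c'$. Summing \eqref{eq:detailedflow} over $c$ then gives \eqref{eq:columnflow}, because $\sum_c D_F(c',c)=1$: $\Remove$ never fails, and its draws form a distribution by Lemma~\ref{lem:scalarbalance}.

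\emph{Base cases.} If $\sigma_F(e)=0$, both procedures return their input unchanged and $q_0=1$. Since $F_0$ spans $e$, adding $e$ changes no span index of any later $f$, so $W_F(L,c,A_L)=W_{F+e}(L,c,A_L)$. Both sides of \eqref{eq:detailedflow} are then equal to $W_F\cdot[c=c']$. If $L=\varnothing$, both weights equal $1$, $I_F=q_{\sigma_F(e)}$ and $D_F=1$, so the identity holds.

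\emph{Inductive step.} Let $f$ be the largest element of $L$, and set $a=\sigma_F(f)$ and $b=\sigma_{F+e}(f)$. We expand each weight by one step of \eqref{eq:localweight} and each transition probability by the branch taken on $f$.

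If $a=b$, then $f$ keeps its level $h$ in both procedures, and the factor of $f$ is the same in $W_F$ and in $W_{F+e}$. Next we need $\sigma_{F'}(e)=\sigma_F(e)$ for the updated background $F'$, so that the factor $q_{\sigma_F(e)}$ is preserved. This follows from closure exchange as in Lemma~\ref{lem:layerexchange}: $f$ newly spans $e$ at level $k$ exactly when $e$ newly spans $f$ there, and that never happens when $a=b$. Also $F'+e=(F+e)'$, so the identity follows from the induction hypothesis.

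If $a>b$, we match the branches.

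(I1)$\leftrightarrow$accepted removal. The left side carries $g_a\cdot\frac{q_ag_b}{q_bg_a}$, and the right side carries $g_b$. By Lemma~\ref{lem:layerexchange} with $i=0$, $\sigma_{F+f}(e)=b$. The induction hypothesis then contributes $q_b$ on the right against the outer $q_a$, and the two sides agree.

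(I2)$\leftrightarrow(\ell=0)$. The left side carries $p-g_a$ times the induction factor $q_b$. The right side carries $q_a(p-g_b)\eta$. These agree by \eqref{eq:eta}.

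(I3) keep-branch$\leftrightarrow$removal at a later level. The left side carries $d_i\,q_a/q_s$ times the induction factor $q_s$, where $s=\sigma_{F^{i,f}}(e)$. The right side carries $q_a d_i$. These agree.

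(I3) recycle-branch$\leftrightarrow(\ell=i)$. This two-call composition is the main obstacle. Write $I_F(c,c')$ in this branch as a sum over the intermediate assignment $r_1$. It equals
\[
(1-q_a/q_s)\,I^{(e)}_{F^{i,f}}(c,r_1)\,I^{(f)}_{F^{i,f}+e}(r_1,c').
\]
Do the same for $D$, with intermediate $r_1$, using the reversed order of calls. Apply the induction hypothesis twice on $L-f$: first for inserting $e$ into $F^{i,f}$, with factor $q_s$; then for inserting $f$ into $F^{i,f}+e$, with factor $q_{t}$, where $t=\sigma_{F^{i,f}+e}(f)=\min\{b,i\}$ by Lemma~\ref{lem:layerexchange}. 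The weights chain as
\[
W_{F^{i,f}}\to W_{F^{i,f}+e}\to W_{F^{i,f}+e+f}=W_{(F+e)+f}.
\]
The intermediate weights cancel after summing over $r_1$. We are left to check the scalar identity
\[
d_i\,(1-q_a/q_s)\,q_s\,q_t=q_a(p-g_b)\,\pi_i .
\]
It holds because $q_s-q_a=q_{\max\{b,i\}}-q_a$: if $s=a$ both sides vanish, and otherwise $s=\max\{b,i\}$. It also holds because $q_t=q_{\min\{b,i\}}$, which together with \eqref{eq:pii} closes the case.

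The delicate points are the following. First, the fourth case of \eqref{eq:localweight} kills all mismatched configurations. Second, (I3) with $i\ge a$ gives $s=a$, so the recycle branch has probability zero, consistent with $i\notin K_a$. Third, the bookkeeping of the span indices $\sigma_{F^{i,f}}(e)$ and $\sigma_{F^{i,f}+e}(f)$ is done via Lemma~\ref{lem:layerexchange}.
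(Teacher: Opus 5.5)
Your proposal is correct and takes essentially the same route as the paper: induction on $|L|$, the $a=b$ case via closure exchange, and, for $a>b$, a branch-by-branch pairing of $\Insert$ with $\Remove$, where you handle the recycling branch by applying the induction hypothesis twice through the intermediate assignment $r_1$ and take $\sigma_{F^{i,f}}(e)$ and $\sigma_{F^{i,f}+e}(f)=\min\{b,i\}$ from Lemma~\ref{lem:layerexchange}. The differences are presentational: you check each scalar identity against $\eta$ and $\pi_i$ directly and use Lemma~\ref{lem:scalarbalance} only for the normalization of $\Remove$'s draws, whereas the paper first sums the contributions over earlier levels; and you use $\sigma_F(e)=a$ for $a>b$ without stating it, as the paper also does, though it is proved inside the proof of Lemma~\ref{lem:layerexchange}.
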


\begin{proof}
We use induction on $|L|$.  The claim is immediate if $\sigma_F(e)=0$.  If $L=\varnothing$, insertion succeeds with probability $q_{\sigma_F(e)}$ while removal succeeds deterministically, so~\eqref{eq:detailedflow} holds.

Let $f$ be the largest element of a nonempty $L$, and define $a,b$ as in the procedures.  We call $c$ the assignment before insertion and $c'$ the assignment after insertion.  If $a=b$, the local factor of $f$ is identical before and after inserting $e$.  The inserted element $e$ also has the same span index after $f$ is added at its current level.  The induction hypothesis therefore gives the claim.

Assume $a>b$.  We compare the possible levels and acceptance statuses of $f$ after insertion.

If $f$ is accepted at level $0$, its factor before insertion is $g_a$.  The continuation probability $\frac{q_ag_b}{q_bg_a}$ in case~(I1) and the recursive factor $q_b$ multiply to
\[
g_a\cdot\frac{q_ag_b}{q_bg_a}\cdot q_b=q_ag_b,
\]
which is the required factor $g_b$ after insertion, multiplied by $q_a$.

If $f$ remains at a later level $i$, let $s=\sigma_{F^{i,f}}(e)$.  Its level factor is $d_i$ both before and after insertion.  The probability of keeping it there and the recursive identity give
\[
d_i\cdot\frac{q_a}{q_s}\cdot q_s=q_ad_i.
\]

It remains to consider the case in which $f$ is unaccepted at level $0$ after insertion.  One possibility is that it was also unaccepted at level $0$ before insertion.  Its contribution, after applying the induction hypothesis to the remaining elements, is
\[
(p-g_a)q_b.
\]
For each possible earlier level $i<a$, the recycling step contributes
\[
d_i\bigl(q_{\max\{b,i\}}-q_a\bigr)q_{\min\{b,i\}}.
\]
Indeed, $q_{\max\{b,i\}}-q_a$ is the probability from choosing not to keep $f$ at level $i$, after including the factor from the first recursive call.  The second recursive insertion contributes $q_{\min\{b,i\}}$ by Lemma~\ref{lem:layerexchange}.  Summing over all possible earlier levels and applying~\eqref{eq:balanceidentity} gives
\[
q_a(p-g_b),
\]
which is exactly the required weight after insertion.  The probabilities~\eqref{eq:eta} and~\eqref{eq:pii} are the contributions from the possible earlier levels divided by their total weight.  Reversing the recursive calls therefore proves~\eqref{eq:detailedflow} for every paired sequence of recursive choices.

Summing~\eqref{eq:detailedflow} over all assignments $c$ before insertion and using that $\Remove$ always outputs some $c$ gives~\eqref{eq:columnflow}.
\end{proof}

\section{The Online Algorithm}\label{sec:algorithm}

We now place the local insertion and removal scans inside an online process.  Every independence query concerns only already-arrived elements.  Actual acceptance and rejection decisions remain irrevocable.

Algorithm~\ref{alg:online} gives the online Multilevel Dynamic Thinning algorithm.  Let $H_t$ denote the set of elements that have arrived by time $t$.  Besides real arrivals and the fixed refresh times, the algorithm handles \emph{internal-deletion events}.  During each such event, it first runs the removal scan to update the levels of smaller active elements and then removes one rejected element from all current reference sets.  The rejection decision for that element does not change.  In every call to $\Insert$ or $\Remove$, the background $F$, the remaining set $L$, and the level assignment $c$ have the meanings defined in Section~\ref{sec:local}.  In Algorithm~\ref{alg:online}, $c$ denotes the current level assignment determined by the reference sets, and $c|_L$ is its restriction to $L$.  The subsections following the pseudocode explain each part in more detail.

\begin{algorithm}[H]
\caption{Multilevel Dynamic Thinning}
\label{alg:online}
\begin{algorithmic}[1]
\Require A schedule $1/e\le T_0=p<T_1<\cdots<T_m=1$ and the functions $q$ and $g$
\Ensure An accepted set $A$
\State Reject every element arriving by time $p$, and let $S=H_p$ be this sample \label{line:online-sample}
\State $j\gets0$, $X_0\gets S$, $A\gets\varnothing$, and $t\gets p$ \label{line:online-initialize}
\State Set $C_t\gets X_0\cap\OPT(X_j)$, and give each $f\in C_t\setminus A$ an independent clock with rate $q(t)/(p-g(t))$ \label{line:online-clocks}
\While{an arrival, refresh, or clock ring remains before time $1$}
  \State Advance $t$ to the next such event
  \If{the event is the refresh time $T_{j+1}$}
    \State $j\gets j+1$ and $X_j\gets H_t$ \label{line:online-refresh}
    \State Assign level $j$ to every element of $X_j\setminus X_{j-1}$ \label{line:online-refreshlevels}
  \ElsIf{the event is the arrival of an element $e$}
    \State $B^+\gets\OPT(X_j+e)$ \label{line:online-greedytest}
    \If{$e\notin B^+$}
      \State Reject $e$
    \Else
      \State Set $F_i\gets\{x\in X_i:x\succ e\}$ for $0\le i\le j$, and set $L\gets\{x\in X_j:e\succ x\}$
      \State $r\gets\Insert(L,F,e,c|_L,A\cap L)$ \label{line:online-insert}
      \If{$r$ is failure}
        \State Reject $e$ and leave the reference sets unchanged
      \Else
        \State Apply the levels in $r$ (set $f\in X_i$ iff $r_f\le i$ for every $f\in L$), add $e$ to every $X_i$, and set $A\gets A+e$ \label{line:online-accept}
      \EndIf
    \EndIf
  \Else \Comment{the clock of some $f\in C_t\setminus A$ rings}
    \State Set $F_i\gets\{x\in X_i:x\succ f\}$ for $0\le i\le j$, and set $L\gets\{x\in X_j:f\succ x\}$
    \State $r\gets\Remove(L,F,f,c|_L,A\cap L)$ \label{line:online-remove}
    \State Apply $r$ to $X_0,\ldots,X_j$ (set $x\in X_i$ iff $r_x\le i$ for every $x\in L$), and remove $f$ from every $X_i$ \label{line:online-delete}
  \EndIf
  \If{the reference sets or $A$ changed during this event}
    \State Recompute $C_t$ and assign fresh independent clocks to $C_t\setminus A$ as in line~\ref{line:online-clocks} \label{line:online-restart}
  \EndIf
\EndWhile
\State \Return $A$
\end{algorithmic}
\end{algorithm}

\subsection{Initialization and refreshes}\label{sec:initrefresh}

Lines~\ref{line:online-sample} and~\ref{line:online-initialize} initialize the algorithm.  Reject every element with auxiliary arrival time at most $p=T_0$.  Let this sample be $S$.  Set
\(
X_0=S\) and 
\(A=\varnothing,
\)
and start stage $0$ at time $p$.

Lines~\ref{line:online-refresh} and~\ref{line:online-refreshlevels} describe a refresh.  At a refresh time $T_j$ with $j\ge1$, keep all old reference sets and create the new outer reference set
\begin{equation}\label{eq:refresh}
X_j=H_{T_j},
\end{equation}
where $H_{T_j}$ is the set of all elements that have arrived by that time.  Thus every element rejected or internally deleted since the previous refresh enters the new outer level.  

Recall that an element is active when it belongs to at least one current reference set.  Since
\(
X_0\subseteq X_1\subseteq\cdots\subseteq X_j,
\)
the active elements during stage $j$ are exactly the elements of $X_j$.  The level of an active element is the smallest index $i$ such that it belongs to $X_i$.

\subsection{Processing a real arrival}\label{sec:arrival}

Lines~\ref{line:online-greedytest}--\ref{line:online-accept} describe the processing of a real arrival.  Suppose $e$ arrives at time $t\in[T_j,T_{j+1})$.  First compute
\(
B^+=\OPT(X_j+e).
\)
If $e\notin B^+$, reject it and leave all reference sets unchanged.

Assume $e\in B^+$.  The elements larger than $e$ in each $X_i$ form the initial background $F_i$.  Let $L$ be the active elements smaller than $e$, together with their current levels and information about which of them have been accepted.  Run
\[
\Insert(L,F,e,c|_L,A\cap L).
\]
If the call fails, reject $e$ and leave the reference sets unchanged.  If it succeeds, apply the returned level changes to the reference sets: for every $f\in L$, adjust its memberships so that $f\in X_i$ exactly when $r_f\le i$.  Then put $e$ into every reference set $X_0,\ldots,X_j$ and accept it:
\(
A\leftarrow A+e.
\)

The greedy test performed before calling $\Insert$ is important.  The local insertion scan only redistributes the reference levels of already active smaller elements.  It does not change the outer reference set $X_j$ except for adding $e$: every element whose level is reassigned already belongs to $X_j$, and every returned level lies in $\{0,\ldots,j\}$, so the element still belongs to $X_j$ after the reassignment.  Hence the test $e\in\OPT(X_j+e)$ remains valid after the scan.

\subsection{Removing rejected reference elements}\label{sec:deletions}

Lines~\ref{line:online-clocks}, \ref{line:online-remove}, and~\ref{line:online-delete} summarize the internal-deletion rule.  To make room for future arrivals, the algorithm sometimes removes an already rejected element from the reference sets.  We call such a removal an \emph{internal deletion} because it changes only the auxiliary reference sets and does not revise the element's online rejection.

Define the set of elements in $X_0$ that also belong to $\OPT(X_j)$ by
\[
C_t=X_0\cap\OPT(X_j).
\]
Only unaccepted elements of $C_t$ receive internal deletion clocks.  Each $f\in C_t\setminus A$ has instantaneous deletion rate
\begin{equation}\label{eq:deletionrate}
\delta(t)=\frac{q(t)}{p-g(t)}.
\end{equation}
This means that, conditional on the clock of $f$ not having rung before time $t$, it rings during $[t,t+h]$ with probability $\delta(t)h+o(h)$ as $h\downarrow0$.   If the clock of $f$ is the next event, the algorithm performs an internal deletion for $f$.  It keeps all larger active elements fixed and runs
\[
\Remove(L,F,f,c|_L,A\cap L)
\]
on the smaller active elements, and then removes $f$ from every current reference set $X_0,\ldots,X_j$.  The element remains rejected and is now outside the outer reference set $X_j$.  The accepted set $A$ does not change.

Although the rate $\delta(t)$ varies with time, the next clock ring can still be sampled explicitly.  If the reference sets and $A$ do not change on an interval $[s,u]$ inside one stage, then
\[
\int_s^u\delta(v)\,dv
=
\log\frac{p-g(s)}{p-g(u)},
\]
because $g'=q$.  At the current time $s$, let $d=|C_s\setminus A|$.  If $d>0$, sample the first ring among these $d$ active clocks by drawing one random variable $E\sim\operatorname{Exp}(1)$ and solving
\[
d\int_s^u\delta(v)\,dv=E
\]
for the candidate ringing time $u$.  Equivalently,
\[
p-g(u)=\bigl(p-g(s)\bigr)e^{-E/d},
\]
so formula~\eqref{eq:qgstage} gives $u$ explicitly within the current stage.  Since all $d$ clocks have the same rate, conditional on a clock ringing next, its element is chosen uniformly from $C_s\setminus A$.  A change to the reference sets or $A$ may change $C_t\setminus A$, so the algorithm recomputes the elements that receive clocks.  Restarting the remaining clocks is valid because, conditional on no ring before time $t$, the probability that a clock also does not ring during $[t,u]$ is $\exp(-\int_t^u\delta(v)\,dv)$, the same as for a fresh clock started at $t$.

\subsection{Feasibility}

The local insertion and removal scans were designed to preserve $A\subseteq\OPT(X_0)$.  The following lemma guarantees this property throughout the online process.

\begin{lemma}[Feasibility invariant]\label{lem:feasible}
After initialization and after every arrival, refresh, and internal deletion,
\[
A\subseteq\OPT(X_0).
\]
In particular, the algorithm always returns an independent set.  An element is added to $A$ only at its real arrival and is never removed from $A$.
\end{lemma}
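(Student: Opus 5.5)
We prove the invariant by induction over the events of Algorithm~\ref{alg:online}: initialization, refreshes, real arrivals (rejected or accepted), and internal deletions. The last two sentences of the lemma are read off the pseudocode. $A$ is modified only in line~\ref{line:online-accept}, at the real arrival of $e$, and no line removes elements from $A$. Independence then follows because $\OPT(X_0)$ is independent.

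At initialization $A=\varnothing$, so the invariant holds trivially. A refresh only creates the new outer set $X_j$ and leaves $X_0$ and $A$ unchanged. A rejected arrival changes nothing.

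For an internal deletion of $f$, note that $f\notin A$. The key observation is that $\Remove$ never moves an accepted element. In lines~\ref{line:remove-accepted-check}--\ref{line:remove-accepted-return} an accepted level-$0$ element is kept at level $0$, and the random draw in line~\ref{line:remove-draw} is applied only to unaccepted elements. So $A\cap L$ stays at level $0$ and $X_0$ only loses unaccepted elements (namely $f$ and recycled elements). Every $x\in A$ then lies in the new $X_0'\subseteq X_0$. Since $x\in\OPT(X_0)$ and $x\in X_0'\subseteq X_0$, the second part of Lemma~\ref{lem:inherit} gives $x\in\OPT(X_0')$.

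For an accepted arrival, the new innermost set is $X_0'=$ (the old $X_0$ with levels adjusted by $r$) $+e$. Here the direction is reversed, since elements enter $X_0$ through recycling steps and through $e$, so inheritance alone does not suffice. I would use the weight argument instead. Along any successful run of $\Insert$, the flow identity~\eqref{eq:detailedflow} in Lemma~\ref{lem:localflow} shows that the output configuration $(c',A_L)$ has positive weight $W_{F+e}(L,c',A_L)$ whenever the input has positive weight. The factor $q_{\sigma_F(e)}$ is positive by Lemma~\ref{lem:schedulevalid}, and the success probability $I_F(c,c')$ is positive along that run.

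By the discussion after~\eqref{eq:localweight}, positive weight means that each accepted $f\in L$ has level $0$ and $\sigma(f)\ne 0$ with respect to the level-$0$ background formed by larger elements. In other words, $f$ is not spanned by the larger elements of $X_0'$, which is exactly the greedy criterion $f\in\OPT(X_0')$. For accepted elements larger than $e$, the background is unchanged below them, so their greedy membership is unchanged. For $e$ itself we need $e\in\OPT(X_0')$. This holds because $\Insert$ fails or returns immediately when $\sigma_F(e)=0$. I would treat the immediate-return branch carefully, perhaps by showing the accepting path requires $q_{\sigma_F(e)}>0$ with $\sigma_F(e)\neq 0$, or by noting that $e\in\OPT(X_j+e)$ together with Lemma~\ref{lem:inherit} gives the result.

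To make this induction clean, I would strengthen the invariant to: "the current configuration has positive weight $W$ relative to the background of larger elements." This strengthened invariant is equivalent to the greedy condition for accepted elements, so it implies $A\subseteq\OPT(X_0)$.

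The main obstacle is the accepted-arrival case, specifically the $\sigma_F(e)=0$ early return. If $F_0$ already spans $e$, then $e$ cannot join $\OPT(X_0')$, yet $\Insert$ returns $c$ as success and $e$ would be accepted. I expect to resolve this by checking that the online call reaches this branch only with zero weight, using $g_0=0$ in the weight of $e$. Failing that, I would argue that the greedy test $e\in\OPT(X_j+e)$ combined with $F_0\subseteq F_j$ excludes $\sigma_F(e)=0$, and verify this via Lemma~\ref{lem:exchange}.
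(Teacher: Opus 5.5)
Your proof is correct. For initialization, refreshes, rejected arrivals and internal deletions it coincides with the paper's argument. There, $\Remove$ keeps accepted elements at level $0$ and never moves anything into $X_0$, so $A\subseteq X_0^+\subseteq X_0^-$ and Lemma~\ref{lem:inherit} applies. Your treatment of $e$ itself, via the greedy test and Lemma~\ref{lem:inherit}, is also the paper's.

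The genuine difference is how you protect the accepted elements of $L$ during a successful insertion. The paper proves this directly, by induction on $|L|$: every successful call $\Insert(L,F,z,c,A_L)$ keeps $A_L$ in the level-$0$ greedy set. An accepted $f$ can leave $\OPT(X_0)$ only in case~(I1) with $b=0$, and there the continuation coin $\frac{q_ag_b}{q_bg_a}$ is zero because $g_0=0$. In the recycling branch the hypothesis is applied to the two recursive calls in turn.

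You instead read this off the detailed flow identity~\eqref{eq:detailedflow}. Positive input weight and $I_F(c,c')>0$ force $W_{F+e}(L,c',A_L)>0$, and positive weight is exactly the greedy condition for the accepted elements. This is shorter and shows feasibility as a by-product of weight preservation. The cost is that it rests on the full Lemma~\ref{lem:localflow}, including the pairing with $\Remove$ and Lemma~\ref{lem:scalarbalance}. The paper's argument needs only the structure of $\Insert$ itself, in particular the vanishing (I1) coin.

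Your equivalence ``greedy condition $\Leftrightarrow$ positive weight'' also requires every potentially nonzero factor to be strictly positive:
\begin{itemize}
\item $g_k>0$ for $k\neq0$, since $g'=q>0$ and $t>p$;
\item $p-g_k>0$, by Lemma~\ref{lem:schedulevalid};
\item $d_i>0$.
\end{itemize}
The positivity of $q_{\sigma_F(e)}$ that you cite is not needed for the implication.

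On the obstacle you flag: $\Insert$ \emph{succeeds} in its early-return branch, so ``fails or returns immediately'' is no justification. Your first proposed fix via $g_0=0$ cannot work either, because the online algorithm never consults weights. Your second fix is the right one: the greedy test gives $e\notin\cl(F_j)\supseteq\cl(F_0)$, so $\sigma_F(e)=\staridx$ at the top-level call.

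More simply, $\Insert$ only reassigns levels in $\{0,\ldots,j\}$ to elements already in $X_j$, so $X_j^+=X_j^-+e$. Lemma~\ref{lem:inherit}, applied to $e\in\OPT(X_j^-+e)$ and $e\in X_0^+\subseteq X_j^+$, then gives $e\in\OPT(X_0^+)$. Lemma~\ref{lem:exchange} is not needed.
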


\begin{proof}
The claim holds initially.  A refresh changes only the reference hierarchy and leaves $X_0,A$ unchanged.

Consider an internal deletion of an element $z$, and let $X_i^-$ and $X_i^+$ denote the reference sets immediately before and after the deletion.  The algorithm chooses $z\in C_t\setminus A$, so $z$ is not accepted.  When $\Remove$ processes the largest element $f$ in the current set $L$ and $f\in A_L$, validity gives $c'_f=0$, and the procedure keeps $f$ at level $0$.  The only possible change to the level of a processed element is from level $0$ to a later level, in the case $c'_f=0$ and $f\notin A_L$.  Thus $\Remove$ may move unaccepted elements out of $X_0$, but it never moves an accepted element out of $X_0$ or any element into $X_0$.  The same holds in every recursive call.  Therefore, after applying the returned assignment and deleting the unaccepted element $z$, the accepted set is unchanged and
\(
A\subseteq X_0^+\subseteq X_0^-.
\)
Before the deletion, the induction hypothesis for the online process gives $A\subseteq\OPT(X_0^-)$.  Since $X_0^+$ is a subset of $X_0^-$ that still contains every element of $A$, Lemma~\ref{lem:inherit} gives $A\subseteq\OPT(X_0^+)$.  Thus an internal deletion preserves the invariant.

Next, consider a successful insertion of a new element $e$.  Let $X_i^-$ and $A^-$ denote the reference sets and accepted set immediately before the insertion, 
and let $X_i^+$ and $A^+$ denote them immediately afterward.  We prove that, after accepting $e$, the invariant \(
A^+\subseteq\OPT(X_0^+)
\) still holds.

First, we prove by induction on $|L|$ that every successful call $\Insert(L,F,z,c,A_L)$, after applying its output and adding $z$, keeps all elements of $A_L$ in the level-$0$ greedy set.  
If $L=\varnothing$, the claim is true.  If $\sigma_F(z)=0$, then $z\in\cl(F_0)$ and hence adding $z$ changes none of the closures $\cl(F_i)$, so the unchanged assignment $c$ still satisfies the claim.

Now let $f$ be the largest element of a nonempty $L$.  If $f\in A_L$, validity gives $c_f=0$.  Write $a=\sigma_F(f)$ and $b=\sigma_{F+z}(f)$.  If $a=b$, adding $z$ does not change the span status of $f$.  If $a>b$, case~(I1) applies.  The only way for $f$ to leave the level-$0$ greedy set is to have $b=0$.  In that case $g_b=g_0=g(T_0)=g(p)=0$, so the continuation probability $\frac{q_ag_b}{q_bg_a}$ in case~(I1) is zero and the call fails.  Hence, whenever the call succeeds, $f$ remains in the level-$0$ greedy set.  The recursive call on $L-f$ succeeds as well, so the induction hypothesis protects every accepted element in $A_L\setminus\{f\}$.  Its level changes concern only elements smaller than $f$ and therefore cannot change the greedy status of $f$.

If $f\notin A_L$, then $f$ is not accepted, so it remains only to protect the accepted elements in $L-f$.  If the procedure makes only one recursive call, that call is on $L-f$.  Since $f\notin A_L$, we have $A_L\setminus\{f\}=A_L$, and the induction hypothesis proves the claim for this case.  It remains to consider the recycling case~(I3), in which the procedure makes two recursive calls.  If the original call succeeds, both recursive calls succeed.  Applying the induction hypothesis first to the insertion of $z$ and then to the insertion of $f$ shows that all elements of $A_L$ remain in the level-$0$ greedy set after both updates.  This proves the claim.  In the initial call, $A_L=A^-\cap L$.  Every element of $A^-\setminus L$ is larger than the newly arriving $e$ and is unaffected by changes to smaller elements.  Consequently, every successful insertion keeps all elements of $A^-$ in $\OPT(X_0^+)$.

Second, we show that the newly accepted element $e$ also belongs to $\OPT(X_0^+)$.  The initial greedy test gives $e\in\OPT(X_j^-+e)$.  The insertion scan changes only the inner reference sets, so after the insertion
\(
X_j^+=X_j^-+e.
\)
Moreover, $e\in X_0^+\subseteq X_j^+$.  Lemma~\ref{lem:inherit} therefore gives $e\in\OPT(X_0^+)$.  Since $A^+=A^-+e$, the invariant follows.
\end{proof}

\section{Joint Distribution}\label{sec:distribution}

We now determine the exact joint distribution of the reference sets, the arrived and unarrived elements, and the accepted set.  

\subsection{State weights}

Fix a stage $j$ and an interior time $t\in(T_j,T_{j+1})$.  Let
\(
Y=X_j
\)
be the outer reference set, let $Z$ be the arrived elements outside $Y$, and let $U$ be the unarrived elements.  Then
\[
E=Y\mathbin{\dot\cup}Z\mathbin{\dot\cup}U.
\]
Each $f\in Y$ has a level $c_f\in\{0,\ldots,j\}$, and these labels determine all nested reference sets through
\[
X_i=\{f\in Y:c_f\le i\}.
\]

The tuple $(Y,Z,U,c,A)$ records the complete configuration at time $t$, and we call it a \emph{state}.  A state is valid when $A\subseteq Y$, every element of $A$ has level $0$, and $A\subseteq\OPT(X_0)$.  All other states are invalid.

The quantity $W_t(Y,c,A)$ is the contribution of the active configuration $(Y,c,A)$ to the weight of a complete state.   We now define $W_t(Y,c,A)$ explicitly.  For each $f\in Y$ and each $0\le i\le j$, let
\[
F_i(f)=\{x\in Y:x\succ f,\ c_x\le i\}.
\]
Thus $F_i(f)$ consists of the elements larger than $f$ that belong to $X_i$.  Write $F(f)=(F_0(f),\ldots,F_j(f))$.  The span index of $f$ is $\sigma_{F(f)}(f)=\min\{k\in\{0,\ldots,j\}:f\in\cl(F_k(f))\}$,
 and in particular, $\sigma_{F(f)}(f)=\staridx$ when $f\notin\cl(F_j(f))$. Write $g_{\staridx}=g(t)$.  The local factor of $f$ is
\[
w_t(f;Y,c,A)=
\begin{cases}
g_{\sigma_{F(f)}(f)}, & c_f=0,\ f\in A,\\
p-g_{\sigma_{F(f)}(f)}, & c_f=0,\ f\notin A,\\
d_{c_f}, & c_f\ge1,\ f\notin A,\\
0, & c_f\ge1,\ f\in A.
\end{cases}
\]
For a valid state, set
\begin{equation}\label{eq:stateweight}
W_t(Y,c,A)=\prod_{f\in Y}w_t(f;Y,c,A),
\end{equation}
and set $W_t(Y,c,A)=0$ for an invalid state.  This product is exactly the result of applying the recursive local-weight definition~\eqref{eq:localweight} to the elements of $Y$ in decreasing order, starting from the empty background.

Define
\begin{equation}\label{eq:jointdistribution}
\mu_t(Y,Z,U,c,A)
=
W_t(Y,c,A)
(t-T_j)^{|Z|}(1-t)^{|U|}.
\end{equation}
The next theorem shows that this formula gives the actual distribution of the online process.

\begin{theorem}[Multilevel thinning distribution]\label{thm:jointdistribution}
For every stage $j$, every $t\in[T_j,T_{j+1}]$, and every valid state $(Y,z,u,c,a)$,
\[
\Pr[(X_j,Z,U,c,A)=(Y,z,u,c,a)]
=
\mu_t(Y,z,u,c,a),
\]
At a refresh time $T_i$, the formula for stage $i-1$ is interpreted by the left limit $t\uparrow T_i$, immediately before the refresh, whereas the formula for stage $i$ is interpreted by the right limit $t\downarrow T_i$, immediately after the refresh.

Consequently, conditional on the reference sets $X_0,\ldots,X_j$, the events that individual elements of $\OPT(X_0)$ have been accepted are independent.  An element of $\OPT(X_0)$ first excluded from a greedy set at level $i$ has conditional acceptance probability $g_i/p$.  An element of $\OPT(X_0)$ still in $\OPT(X_j)$ at time $t$ has conditional acceptance probability $g(t)/p$.
\end{theorem}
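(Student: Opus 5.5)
We prove the formula by induction over time, checking that $\mu_t$ is preserved by every kind of event: the initial sample, evolution inside a stage (arrivals and internal deletions), and refreshes.

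\textbf{Initialization and refreshes.} At $t=p$ the state has $Y=S$, $Z=\varnothing$, all levels $0$ and $A=\varnothing$. Each $f\in Y$ then has factor $p-g_{\sigma}$. The innermost background is nonempty only through larger elements, and $g(p)=0$, so the factor is $p$ whether or not $f$ is spanned. Hence $\mu_p=p^{|Y|}(1-p)^{|U|}$, which is exactly the probability that the sample equals $Y$.

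At a refresh $T_i$, each $z\in Z$ (weight $d_i=T_i-T_{i-1}$ in the left limit) is placed at level $i$, where its factor is also $d_i$. Factors of existing elements do not change. The only exception is an element with $\sigma=\staridx$ whose new level $i$ spans it: its factor $g_\staridx=g(T_i^-)$ becomes $g_i$, and these agree by continuity of $g$. So $\mu$ passes continuously through the refresh, with $(t-T_i)^{|Z|}$ restarting at $Z=\varnothing$.

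\textbf{Within a stage.} Here I would verify the Kolmogorov forward equation $\partial_t\mu_t=\text{(inflow)}-\text{(outflow)}$ state by state. Fix a valid state $(Y,z,u,c,a)$. Its time derivative has three parts.

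First, $|z|(t-T_j)^{|z|-1}$, from the $Z$ factor. I would match this with inflow from internal deletions: a state with $Y+f$, with $f$ unaccepted at level $0$ and in $\OPT(X_j)$, transitions at rate $\delta(t)$ through $\Remove$. By the reverse direction of Lemma~\ref{lem:localflow}, summed over the resulting $c$, the inflow from removing $f$ aggregates to $\delta(t)(p-g(t))\cdot(\text{weight with }f\text{ in }Z)$. Since $\delta(t)(p-g(t))=q(t)$ and $\mu$ carries a factor $(t-T_j)$ per $Z$ element with rate $1$, this needs care. Arrivals that are rejected also feed $Z$: the rate per unarrived element is $1/(1-t)$, contributing $(1-t)^{|U|-1}$. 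Rejections split into greedy-test failures and $\Insert$ failures. Lemma~\ref{lem:localnorm} together with \eqref{eq:columnflow} shows that the success mass equals $q_{\sigma}$ times the weight. The failure mass is therefore $T_j^{|L|}$ minus the inserted weight, and it must combine with the deletion inflow to give exactly the $Z$ derivative.

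Second, $-|u|(1-t)^{|u|-1}$, from the $U$ factor. This is the outflow of arrivals, and it matches directly.

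Third, $\partial_t W_t$. Only the factors with $\sigma=\staridx$ depend on time, through $g'(t)=q(t)$. The accepted ones, with factor $g(t)$, gain inflow from successful insertions of that element $e$. By \eqref{eq:columnflow}, with $\sigma_F(e)=\staridx$, the inflow is $q(t)$ times the weight with $e$ removed, which matches $d g/dt=q$. The unaccepted ones, with factor $p-g(t)$, lose mass at rate $q(t)$; this is exactly the deletion outflow $\delta(t)(p-g(t))$.

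The remaining check is the global count. The outflow of a state through deletions is $\delta(t)\,|C_t\setminus A|$, and its outflow through successful insertions depends only on the product structure. I would organize this via the identity $\sum_{c}W_F I_F=q\,W_{F+e}$ applied with background equal to the elements larger than $e$. That identity is what makes an arrival at $e$ factor cleanly as (prefix weight)$\times$(suffix flow). Lemma~\ref{lem:layerexchange} ensures that $\sigma_{F}(e)$ equals the span index of $e$ in the post-insertion state.

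\textbf{Conditional statement.} This follows from the product form \eqref{eq:stateweight}. Conditioning on $(Y,c)$ fixes every factor's $\sigma$, and for each $f\in\OPT(X_0)$ the two options $f\in A$ or $f\notin A$ have weights $g_\sigma$ and $p-g_\sigma$, independently across $f$. This gives conditional probability $g_\sigma/p$, where $\sigma$ is the first level at which $f$ is spanned by larger elements, i.e.\ leaves the greedy set, or $\staridx$ with $g(t)$. Elements with $\sigma=0$ lie outside $\OPT(X_0)$.

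\textbf{Main obstacle.} I expect the main difficulty to be the bookkeeping of the $Z$ derivative. The mass of failed insertions together with the deletion inflow must produce exactly $|z|(t-T_j)^{|z|-1}$ times the rest. I would try to verify this by summing \eqref{eq:columnflow} and Lemma~\ref{lem:localnorm} over the position of the moved element, reducing it to the pointwise scalar identity $T_j q(t)+(p-g(t))\cdot\delta$-type relations from the proof of Lemma~\ref{lem:scalarbalance}.
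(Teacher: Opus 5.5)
Your overall plan is the paper's: the initial Bernoulli-$p$ sample, continuity of the weights across a refresh, and term-by-term matching of $\partial_t\mu_t$ against arrival outflow, deletion outflow and acceptance inflow (via \eqref{eq:columnflow} with $\sigma_F(e)=\staridx$). The conditional statement via the product form is also as in the paper. The gap is the one term you leave open, the $Z$-term $|Z|\,\mu_t/(t-T_j)$, and the route you propose for it cannot close it. The forward equation must hold for each fixed state $\xi=(Y,Z,U,c,A)$. Identities obtained by summing over level assignments, such as Lemma~\ref{lem:localnorm} or \eqref{eq:columnflow}, lose exactly the dependence on $c$ that matters here. No scalar identity of the type in Lemma~\ref{lem:scalarbalance} is involved at this step. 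Your description of the deletion inflow is also off. You sum over the ``resulting'' assignment, i.e.\ the output of $\Remove$, which only reproduces \eqref{eq:columnflow}, the acceptance-inflow identity. Moreover, the correct aggregate is not $q(t)$ times a weight.

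The missing idea is to pair, for each $e\in Z$, the rejection inflow and the deletion inflow into the \emph{same} state. The link is the success probability $P_e=\sum_{c'}I_F(c|_L,c')$ of an arrival of $e$ from the unique rejection predecessor $\xi_e^-=(Y,Z-e,U+e,c,A)$, with $F$ and $L$ given by the elements of $Y$ larger and smaller than $e$. The direct rejection inflow is
\[
\mu_t(\xi_e^-)\cdot\tfrac{1}{1-t}\cdot(1-P_e)=\mu_t(\xi)\,\frac{1-P_e}{t-T_j}.
\]
The deletion predecessors are the states with $e$ unaccepted at level $0$ and the smaller elements at some $c'$. In each, the factor $p-g(t)$ of $e$ times $\delta(t)$ gives $q(t)$. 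Now sum \eqref{eq:detailedflow} over $c'$, the output of $\Insert$ and equivalently the input of $\Remove$:
\[
q(t)\sum_{c'}W_{F+e}(L,c',A\cap L)\,D_F(c',c|_L)=W_F(L,c|_L,A\cap L)\,P_e .
\]
Hence the deletion inflow is $\mu_t(\xi)P_e/(t-T_j)$, and the two inflows add to $\mu_t(\xi)/(t-T_j)$ whatever $P_e$ is. If $e$ fails the greedy test, then $P_e=0$ and there are no deletion predecessors. So what is needed is the pathwise identity \eqref{eq:detailedflow} summed over the other index: the deletion inflow into $\xi$ exactly replaces the fraction $P_e$ of arrivals of $e$ from $\xi_e^-$ that are accepted rather than rejected.

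Two smaller points. There is no separate ``outflow through successful insertions'': every arrival leaves the state and is already counted in $|U|/(1-t)$. To conclude that $\mu_t$ is the actual law, you also need uniqueness for the finite forward system, whose rates are bounded on compact subintervals away from $t=1$.
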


Before proving the theorem, we check that~\eqref{eq:jointdistribution} is a probability distribution and that refreshes connect adjacent stages correctly.

\begin{lemma}[Normalization]\label{lem:globalnorm}
For every $t$ in stage $j$, the values in~\eqref{eq:jointdistribution} sum to one over all states.
\end{lemma}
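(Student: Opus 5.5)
We need to show that the sum over all states $(Y,Z,U,c,A)$ of $W_t(Y,c,A)(t-T_j)^{|Z|}(1-t)^{|U|}$ equals one. The plan is to sum in two stages. First we fix the partition $E=Y\mathbin{\dot\cup}Z\mathbin{\dot\cup}U$ and sum over $(c,A)$. Then we sum over the partitions using the binomial (trinomial) theorem.

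For the inner sum, fix $Y$. The state weight \eqref{eq:stateweight} is the recursive local weight \eqref{eq:localweight} applied to $L=Y$ with the empty background $F=(\varnothing,\ldots,\varnothing)$, as the paper notes right after \eqref{eq:stateweight}. Invalid states have weight zero in both descriptions:
\begin{itemize}[leftmargin=*]
\item an accepted element at level $\ge1$ gets factor $0$;
\item an accepted level-$0$ element spanned by its background $F_0$ gets factor $g_0=g(p)=0$.
\end{itemize}
The validity condition $A\subseteq\OPT(X_0)$ is exactly that no accepted $f$ lies in $\cl(F_0(f))$. Hence the sum over all $(c,A)$ of $W_t(Y,c,A)$ equals $\sum_{c,A_L}W_\varnothing(Y,c,A_L)$. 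By Lemma~\ref{lem:localnorm} (with the convention $g_{\staridx}=g(t)$, which enters only through $g_\sigma+(p-g_\sigma)=p$), this equals $T_j^{|Y|}$.

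The one point needing care is whether the $\staridx$ value in the recursion is evaluated at the current time $t$. It does not matter here, because the normalization telescopes the $g$-terms regardless of their values.

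For the outer sum, each element of $E$ independently chooses to lie in $Y$, $Z$ or $U$. The total is therefore
\[
\sum_{Y\dot\cup Z\dot\cup U=E}T_j^{|Y|}(t-T_j)^{|Z|}(1-t)^{|U|}
=\bigl(T_j+(t-T_j)+(1-t)\bigr)^n=1 .
\]
For $t$ at a refresh endpoint, the same computation applies to the one-sided limits described in Theorem~\ref{thm:jointdistribution}. Nonnegativity of every term follows from Lemma~\ref{lem:schedulevalid}, which gives $0\le g<p$, together with $d_i>0$. So $\mu_t$ is a genuine probability distribution.

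I expect no real obstacle here. The only step requiring verification is the identification of $W_t$ with the recursive weight $W_\varnothing$, including its zero values on invalid states.
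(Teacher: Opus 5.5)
Your proposal is correct and takes essentially the same route as the paper: apply Lemma~\ref{lem:localnorm} with the empty background to get $\sum_{c,A}W_t(Y,c,A)=T_j^{|Y|}$, then sum over the partitions $(Y,Z,U)$ using the trinomial identity. Your extra checks, that invalid states get weight zero in the recursive description and that every term is nonnegative, are details the paper leaves implicit.
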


\begin{proof}
Fix the active set $Y$.  Lemma~\ref{lem:localnorm}, with empty initial background, gives
\[
\sum_{c,A}W_t(Y,c,A)=T_j^{|Y|}.
\]
Therefore summing over all partitions $(Y,Z,U)$ gives
\[
\sum_{Y,Z,U}
T_j^{|Y|}(t-T_j)^{|Z|}(1-t)^{|U|}
=
\bigl(T_j+(t-T_j)+(1-t)\bigr)^n=1.
\]
\end{proof}

\begin{lemma}[Distribution at a refresh]\label{lem:refresh}
Suppose Theorem~\ref{thm:jointdistribution} holds through the end of stage $j-1$.  Applying the deterministic refresh $X_j=H_{T_j}$ transforms the distribution at the end of stage $j-1$ exactly into the initial distribution~\eqref{eq:jointdistribution} for stage $j$.
\end{lemma}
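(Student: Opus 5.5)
Fix a state $(Y,z,u,c,a)$ just after the refresh at $T_j$.  After the refresh the new outer set is $Y=H_{T_j}$, so $z=\varnothing$.  Before the refresh, the old outer set was $Y'=X_{j-1}$ with levels in $\{0,\ldots,j-1\}$.  The arrived elements outside $Y'$ formed $Z'$, and $U$ was unchanged.  The refresh is deterministic and adds every element of $Z'$ to the new level $j$.  So the post-refresh state determines the pre-refresh state uniquely: $Y'=\{f\in Y:c_f\le j-1\}$ and $Z'=\{f\in Y:c_f=j\}$, with the same $c|_{Y'}$, $A$ and $U$.  Hence the probability of the post-refresh state equals the left-limit probability of this pre-refresh state,
\[
W^{(j-1)}_{T_j^-}(Y',c|_{Y'},a)\,(T_j-T_{j-1})^{|Z'|}(1-T_j)^{|u|}
=W^{(j-1)}_{T_j^-}(Y',c|_{Y'},a)\,d_j^{|Z'|}(1-T_j)^{|u|}.
\]
The target value is $W^{(j)}_{T_j^+}(Y,c,a)\cdot 0^{|z|}(1-T_j)^{|u|}$, with $z=\varnothing$.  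Every valid stage-$j$ state reached this way has $z=\varnothing$, and states with $z\neq\varnothing$ have zero weight on both sides.  So it suffices to show
\[
W^{(j)}_{T_j^+}(Y,c,a)=W^{(j-1)}_{T_j^-}(Y',c|_{Y'},a)\,d_j^{|Z'|}.
\]

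I will compare the product~\eqref{eq:stateweight} factor by factor.  For $f\in Z'$, the stage-$j$ factor is $d_{c_f}=d_j$, and $f\notin A$ since $A\subseteq X_0$.  This gives exactly the $d_j^{|Z'|}$ term.

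For $f\in Y'$ with $c_f\ge1$, the factor $d_{c_f}$ is the same in both stages.  For $f\in Y'$ with $c_f=0$, the factor is $g_\sigma$ or $p-g_\sigma$, where $\sigma$ is the span index.  The key step is to check that these indices give the same value of $g$.  The backgrounds $F_i(f)$ for $i\le j-1$ contain only elements of level $\le j-1$, so they agree in the two stages.  In stage $j$ there is an extra level, $F_j(f)=\{x\in Y:x\succ f\}$, which also contains the level-$j$ elements.  The two cases are:
\begin{itemize}
\item If the stage-$(j-1)$ index is a finite $k\le j-1$, the stage-$j$ index is the same $k$.
\item If the stage-$(j-1)$ index is $\staridx$, its value is $g_\staridx=g(T_j^-)=g_j$.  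In stage $j$ the index is either $j$, with value $g_j$, or $\staridx$, with value $g(T_j^+)=g_j$.  Continuity of $g$ makes these agree.
\end{itemize}
So every level-0 factor coincides.

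Validity is preserved in both directions: $A\subseteq \OPT(X_0)$ involves only $X_0$, which the refresh leaves unchanged.  Invalid states have weight zero on both sides.  This identifies the two distributions exactly.

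The main obstacle is the index bookkeeping: making sure that span index $j$ and $\staridx$ at stage $j$ both collapse to the value $g_j$, which is the left-limit value of $\staridx$ at stage $j-1$.  I will handle this using continuity of $g$ at $T_j$ and the convention $g_\staridx=g(t)$.
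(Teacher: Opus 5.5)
Your proposal is correct and follows the paper's proof. The elements of $Z$ trade their factor $T_j-T_{j-1}$ for the level-$j$ factor $d_j$, and any level-$0$ element whose span index moves from $\staridx$ to $j$ (or stays $\staridx$) keeps the value $g(T_j)=g_j$. You also spell out the bijection between pre- and post-refresh states and the preservation of validity, which the paper leaves implicit.
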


\begin{proof}
Immediately before the refresh, every element in $Z$ contributes the factor
\[
T_j-T_{j-1}=d_j.
\]
The refresh moves exactly these elements into the new outer level $j$, whose local factor is also $d_j$.

It remains to consider level-$0$ elements whose greedy status changes because the new outer level has been added.  Just before the refresh, every level-$0$ element still in the outer greedy set uses the current factor $g(T_j)$ if accepted and $p-g(T_j)$ if unaccepted.  If the new level makes such an element leave the greedy set, the factor used from that level onward is $g_j=g(T_j)$, the same value.  Elements that had already left at an earlier level keep their old factors.  Hence the probability assigned to every state is preserved exactly.
\end{proof}

\begin{proof}[Proof of Theorem~\ref{thm:jointdistribution}]
At $t=p$, the set $X_0$ is the Bernoulli-$p$ sample, $A=\varnothing$, and $Z=\varnothing$.  Since $g(p)=0$,~\eqref{eq:jointdistribution} gives exactly this initial distribution.

Fix an interior time $t$ of stage $j$ and a valid state
\[
\xi=(Y,Z,U,c,A).
\]
Let
\[
C=X_0\cap\OPT(X_j).
\]
Only the thinning factors of elements in $C$ depend on the current time.  Every accepted element in $A\cap C$ contributes $g(t)$, while every unaccepted element in $C\setminus A$ contributes $p-g(t)$.  Differentiating~\eqref{eq:jointdistribution} and using $g'=q$ gives
\begin{equation}\label{eq:logderivative}
\frac{\mu_t'(\xi)}{\mu_t(\xi)}
=
\frac{|Z|}{t-T_j}
-
\frac{|U|}{1-t}
+
\frac{|A\cap C|q(t)}{g(t)}
-
\frac{|C\setminus A|q(t)}{p-g(t)}.
\end{equation}
We show that these four terms are exactly the incoming and outgoing probability rates of the online process.

\paragraph{Arrival outflow.}
Each $e\in U$ has conditional arrival rate $1/(1-t)$.  Whether accepted or rejected, its arrival changes the state.  The total outgoing arrival rate is therefore
\(
\mu_t(\xi)\frac{|U|}{1-t},
\)
which gives the second term of~\eqref{eq:logderivative}.

\paragraph{Internal-deletion outflow.}
Each $f\in C\setminus A$ has deletion rate $q(t)/(p-g(t))$.  Hence the total deletion outflow is
\[
\mu_t(\xi)|C\setminus A|\frac{q(t)}{p-g(t)},
\]
which gives the fourth term.

\paragraph{Acceptance inflow.}
Fix $e\in A\cap C$.  Immediately before accepting $e$, it is unarrived.  Remove $e$ from the fixed state and keep the larger active elements unchanged.  The possible level assignments of the smaller elements immediately before the arrival are precisely the inputs to the insertion scan.  Since $e$ is in the current outer greedy set, its span index with respect to larger active elements is $\staridx$.  Equation~\eqref{eq:columnflow} gives a total insertion weight equal to
\(
q(t)
\)
times the weight of the smaller elements after insertion.  The fixed state additionally contains the accepted level-$0$ factor $g(t)$ for $e$.  Before the arrival, the unarrived element contributes $1-t$, which cancels the arrival rate $1/(1-t)$.  Thus every $e\in A\cap C$ contributes
\(
\mu_t(\xi)\frac{q(t)}{g(t)}
\)
to the acceptance inflow.  Summing gives the third term.

\paragraph{Rejection and deletion inflow.}
Fix $e\in Z$.  There is a unique state that can reach $\xi$ when $e$ arrives and is rejected without changing the reference sets:
\[
\xi_e^-=(Y,Z-e,U+e,c,A).
\]
Let $P_e$ be the conditional probability that, starting from $\xi_e^-$, the arriving element $e$ passes the greedy test and the top-level call to $\Insert$ succeeds.  Thus $P_e$ includes all random choices made by the insertion procedure; set $P_e=0$ if $e$ fails the greedy test.

The weight of $\xi_e^-$ is obtained from that of $\xi$ by replacing one factor $t-T_j$ by $1-t$.  After multiplication by the arrival rate $1/(1-t)$ and the rejection probability $1-P_e$, the direct rejection inflow is therefore
\[
\mu_t(\xi)\frac{1-P_e}{t-T_j}.
\]

It remains to sum the deletion inflow.  For every possible successful output assignment $c'$ of $\Insert$ from $\xi_e^-$, consider the state in which $e$ is active, unaccepted, and at level $0$, the smaller active elements have assignment $c'$, and all other parts of the state agree with $\xi$.  These are exactly the possible predecessor states that can reach $\xi$ by internally deleting $e$.  In each such state, $e\in X_0\cap\OPT(X_j)$ and its local factor is $p-g(t)$.  This factor cancels its deletion rate $q(t)/(p-g(t))$.

If $e$ fails the greedy test, no state with $e\in X_0\cap\OPT(X_j)$ can produce this deletion, so both $P_e$ and the deletion inflow are zero.  We may therefore assume below that $e$ passes the greedy test.  In that case $e$ is not spanned by the larger active elements, so $\sigma_F(e)=\staridx$.

Let $F,L,c|_L$, and $A\cap L$ be the inputs determined by $\xi_e^-$ as in the arrival rule.  Summing \eqref{eq:detailedflow} over all output assignments $c'$ gives
\[
\sum_{c'}W_F(L,c|_L,A\cap L)I_F(c|_L,c')
=
q_{\sigma_F(e)}
\sum_{c'}W_{F+e}(L,c',A\cap L)D_F(c',c|_L).
\]
The left side is $W_F(L,c|_L,A\cap L)P_e$, and here $q_{\sigma_F(e)}=q(t)$.  Including the factors of the larger active elements and the elements outside the outer reference set, the total deletion inflow is therefore
\[
\mu_t(\xi)\frac{P_e}{t-T_j}.
\]
Hence the rejection and deletion inflows associated with $e$ add to
\[
\frac{\mu_t(\xi)}{t-T_j}.
\]
Summing over $e\in Z$ gives the first term of~\eqref{eq:logderivative}.

All possible state changes are arrivals, internal deletions, or deterministic refreshes.  We have shown that the candidate probabilities satisfy the full forward equations in the interior of each stage.  On every compact subinterval away from time $1$, the state space is finite and the total transition rate is bounded.  Hence the corresponding finite linear differential system has a unique solution.  The initial condition and Lemma~\ref{lem:refresh} identify~\eqref{eq:jointdistribution} with the actual distribution through every stage.  The statement at time $1$ follows by taking the left limit.

It remains to derive the conditional thinning statement.  Fix the reference sets $X_0,\ldots,X_j$.  They determine $Y=X_j$, the level assignment $c$, $\OPT(X_0)$, and all span indices $\sigma_{F(f)}(f)$.  Fix $B\subseteq\OPT(X_0)$ and evaluate the product $W_t(Y,c,A)$ in~\eqref{eq:stateweight} at $A=B$.  The only factors in this product that vary with the choice of $B$ are
\begin{equation}\label{eq:conditionalfactors}
\prod_{f\in B}g_{\sigma_{F(f)}(f)}
\prod_{f\in\OPT(X_0)\setminus B}
\bigl(p-g_{\sigma_{F(f)}(f)}\bigr).
\end{equation}
Every element of $X_0\setminus\OPT(X_0)$ is forced to be unaccepted and contributes the fixed factor $p-g_0=p$, while every element at a later level contributes its fixed factor $d_i$.  The partition of $E\setminus X_j$ into $Z$ and $U$ is not fixed by the reference sets, but summing over all such partitions contributes the common factor
\[
\sum_{Z\mathbin{\dot\cup}U=E\setminus X_j}
(t-T_j)^{|Z|}(1-t)^{|U|}
=(1-T_j)^{|E\setminus X_j|},
\]
which is also independent of $B$.  Hence all factors other than those in~\eqref{eq:conditionalfactors} cancel when conditioning on the reference sets.  Summing~\eqref{eq:conditionalfactors} over all $B\subseteq\OPT(X_0)$ gives
\[
\begin{aligned}
&\sum_{B\subseteq\OPT(X_0)}
\prod_{f\in B}g_{\sigma_{F(f)}(f)}
\prod_{f\in\OPT(X_0)\setminus B}
\bigl(p-g_{\sigma_{F(f)}(f)}\bigr)\\
&\qquad=
\prod_{f\in\OPT(X_0)}
\left(g_{\sigma_{F(f)}(f)}+p-g_{\sigma_{F(f)}(f)}\right)\\
&\qquad=p^{|\OPT(X_0)|}.
\end{aligned}
\]
The conditional probability of $A=B$ is the weight in~\eqref{eq:conditionalfactors} for this particular $B$, divided by the sum of that weight over all possible subsets.  The preceding calculation shows that this denominator is $p^{|\OPT(X_0)|}$.  Therefore
\[
\begin{aligned}
\Pr[A=B\mid X_0,\ldots,X_j]
&=
\frac{
\displaystyle
\prod_{f\in B}g_{\sigma_{F(f)}(f)}
\prod_{f\in\OPT(X_0)\setminus B}
\bigl(p-g_{\sigma_{F(f)}(f)}\bigr)
}{p^{|\OPT(X_0)|}}\\
&=
\prod_{f\in B}\frac{g_{\sigma_{F(f)}(f)}}p
\prod_{f\in\OPT(X_0)\setminus B}
\frac{p-g_{\sigma_{F(f)}(f)}}p\\
&=
\prod_{f\in B}\frac{g_{\sigma_{F(f)}(f)}}p
\prod_{f\in\OPT(X_0)\setminus B}
\left(1-\frac{g_{\sigma_{F(f)}(f)}}p\right).
\end{aligned}
\]
This is a product of Bernoulli probabilities, so the events $\{f\in A\}$ for $f\in\OPT(X_0)$ are conditionally independent, and the conditional acceptance probability of each $f$ is $g_{\sigma_{F(f)}(f)}/p$.  Suppose $f$ first leaves the greedy set at level $i$.  Then $f\notin\cl(F_k(f))$ for every $k<i$, whereas $f\in\cl(F_i(f))$, so $\sigma_{F(f)}(f)=i$.  Substituting this span index into the Bernoulli parameter gives
\[
\Pr[f\in A\mid X_0,\ldots,X_j]=\frac{g_i}{p}.
\]
If $f$ remains in $\OPT(X_j)$, then no $F_k(f)$ spans it, so $\sigma_{F(f)}(f)=\staridx$.  Since $g_{\staridx}=g(t)$, its conditional acceptance probability is $g(t)/p$.
\end{proof}

The theorem is the exact multilevel analogue of the joint distribution in Dynamic Thinning.  The  difference is that the original single parameter $1-q(t)$ is replaced by a conditional acceptance probability determined by the first level at which an element leaves the greedy set.

\section{Competitive Ratio and Query Complexity}\label{sec:ratio}

The joint distribution reduces the competitive analysis to calculating the acceptance probability of one optimal element.

\begin{proposition}[Acceptance probability for a canonical-optimum element]\label{prop:selection}
For every refresh schedule satisfying $p\ge1/e$ and every $e\in O$,
\(
\Pr[e\in A]=g(1).
\)
Consequently, Multilevel Dynamic Thinning is $1/g(1)$-probability-competitive.
\end{proposition}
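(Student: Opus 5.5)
Fix $e\in O$. The plan is to evaluate $\Pr[e\in A]$ at the end of the execution, $t\uparrow 1$ in stage $m-1$, using Theorem~\ref{thm:jointdistribution}. Lemma~\ref{lem:feasible} says that $e$ is added to $A$ only at its real arrival and is never removed. Hence the event $\{e\in A\}$ at time $1$ is the event that $e$ was accepted at its arrival. In particular, $e$ then lies in $X_0$, and hence in $Y=X_{m-1}$, with level $0$.

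\textbf{Step 1 (conditional acceptance probability of $e$).} Condition on the final reference hierarchy $X_0,\ldots,X_{m-1}$ on the event $e\in X_0$. I claim $e\in\OPT(X_k)$ for every $k$. Since $e\in O$ and $e\in X_k$, Lemma~\ref{lem:inherit} applied with $T=E$ gives $e\in\OPT(X_k)$. Equivalently, $e$ is never spanned by the larger elements $F_k(e)$ of any level, so $\sigma_{F(e)}(e)=\staridx$. By the conditional-thinning part of Theorem~\ref{thm:jointdistribution}, $\Pr[e\in A\mid X_0,\ldots,X_{m-1}]=g(t)/p$ on $\{e\in X_0\}$. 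On $\{e\notin X_0\}$ the conditional probability is $0$, because valid states have $A\subseteq X_0$.

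\textbf{Step 2 (marginal probability that $e\in X_0$).} It remains to show $\Pr[e\in X_0]=p$ at every time $t$. Then $\Pr[e\in A]=p\cdot g(t)/p=g(t)\to g(1)$ as $t\uparrow1$, and the left-limit convention of Theorem~\ref{thm:jointdistribution} gives the value at time $1$. To compute $\Pr[e\in X_0]$, sum $\mu_t$ over all states with $c_e=0$. Because $e\in\OPT(X_k)$ for all $k$ whenever $e$ is active, its local factor always has span index $\staridx$. That factor is therefore $g(t)$ if $e$ is accepted and $p-g(t)$ otherwise, summing to $p$, independently of the other elements. Other elements' factors may depend on whether $e$ sits at level $0$ or at level $i$, since $e$ is in their backgrounds. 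So I would not factor naively. Instead I would use the recursive form \eqref{eq:localweight} together with Lemma~\ref{lem:localnorm}, processing elements in decreasing order. The elements larger than $e$ give some weight that does not depend on $e$. When $e$ is processed, its options contribute the factors $g_\staridx$, $p-g_\staridx$, $d_1,\ldots,d_j$. For each option, the remaining smaller elements of $Y$ sum to $T_j^{|L|}$ by Lemma~\ref{lem:localnorm}, regardless of the background. Hence the level-$0$ options together carry a fraction $p/T_j$ of the total weight of states with $e\in Y$. Next, sum over $Y$ with the factors $(t-T_j)^{|Z|}(1-t)^{|U|}$, as in Lemma~\ref{lem:globalnorm}. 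This shows that $e$ lands in $Y$, $Z$, $U$ with weights $T_j$, $t-T_j$, $1-t$, independently of the rest. Combining the two computations gives $\Pr[e\in X_0]=(p/T_j)\cdot T_j=p$.

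\textbf{Conclusion and main obstacle.} Combining the steps gives $\Pr[e\in A]=g(1)$ for every $e\in O$. This is at least $1/C$ with $C=1/g(1)$, which is the definition of $1/g(1)$-probability-competitiveness. The delicate point is Step 2. One must check that the recursion lets the sum over all configurations of smaller elements be taken \emph{after} fixing $e$'s level, even though $e$'s level changes their backgrounds. Lemma~\ref{lem:localnorm} holds for every background, so this works. One must also confirm that larger elements' factors are unaffected by $e$, which holds because the recursion processes them first.
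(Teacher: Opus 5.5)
Your proof is correct and follows the paper's route. Lemma~\ref{lem:inherit} forces span index $\staridx$, so Theorem~\ref{thm:jointdistribution} gives conditional acceptance probability $g/p$ on $\{e\in X_0\}$; marginalizing the joint distribution then gives $\Pr[e\in X_0]=p$, and your use of Lemma~\ref{lem:localnorm} to sum out the smaller elements after fixing $e$'s level (with evaluation at interior $t$ followed by $t\uparrow1$) makes explicit the factorization that the paper's marginalization step uses only implicitly.
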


\begin{proof}
By Lemma~\ref{lem:inherit}, whenever $e$ belongs to a reference set $X_i$, it also belongs to $\OPT(X_i)$. 
 Hence a canonical-optimum element at level $0$ never leaves any outer greedy set.  For every realization of the final reference hierarchy with $e\in X_0$, 
 Theorem~\ref{thm:jointdistribution} gives the conditional acceptance probability
\[
\Pr[e\in A\mid X_0,\ldots,X_{m-1}]=\frac{g(1)}p.
\]
Since the right-hand side is the same for every such realization, we have
\[
\begin{aligned}
\Pr[e\in A\mid e\in X_0]
&=
\mathbb E\!\left[
\Pr[e\in A\mid X_0,\ldots,X_{m-1}]
\,\middle|\,e\in X_0
\right]=\frac{g(1)}p.
\end{aligned}
\]

We next compute the probability that $e$ has level $0$ by marginalizing the joint distribution in Theorem~\ref{thm:jointdistribution} over all other elements and over the acceptance status of $e$.  Recall that a ``factor'' here means the multiplicative contribution of $e$ to the state weight in~\eqref{eq:jointdistribution}.  If $e$ has level $0$, its local factor is $g_{\sigma_{F(e)}(e)}$ when $e$ is accepted and $p-g_{\sigma_{F(e)}(e)}$ when it is unaccepted.  Thus, after summing over these two possible acceptance states, its total factor is
\[
g_{\sigma_{F(e)}(e)}+
\bigl(p-g_{\sigma_{F(e)}(e)}\bigr)=p.
\]
If $e$ has level $i\in\{1,\ldots,m-1\}$, its local factor is $d_i$.  It remains to identify the factor when $e$ is not active.  
At the final time, use the notation of Theorem~\ref{thm:jointdistribution} with $j=m-1$ and $t=1$.  Then $Y=X_{m-1}$ and $U=\varnothing$,  so an element $e\notin X_{m-1}$ belongs to $Z$.  The term $(t-T_j)^{|Z|}$ in~\eqref{eq:jointdistribution} assigns one factor $t-T_j$ to each element of $Z$. Hence $e$ contributes
\(
t-T_j=1-T_{m-1}.
\) The resulting marginal weights for the mutually exclusive possible locations of $e$ sum to
\[
p+\sum_{i=1}^{m-1}d_i+(1-T_{m-1})
=p+(T_{m-1}-p)+(1-T_{m-1})=1.
\]
They are therefore the marginal probabilities of the possible locations of $e$, and in particular
\(
\Pr[e\in X_0]=p.
\)
Consequently,
\[
\Pr[e\in A]
=
\Pr[e\in X_0]\Pr[e\in A\mid e\in X_0]
=
p\frac{g(1)}p
=g(1).
\]
\end{proof}

\subsection{A geometric multilevel schedule}

We now choose the refresh times.  For an integer $m\ge1$, set
\begin{equation}\label{eq:geomtimes}
r=1+\frac1m,
\qquad
T_i=r^{i-m}
\quad(0\le i\le m).
\end{equation}
Thus $T_m=1$ and
\(
p=T_0=r^{-m}>e^{-1}.
\)
Every stage has the same relative length, that is,
\(
\frac{T_{i+1}-T_i}{T_i}=\frac1m.
\)
By~\eqref{eq:qgstage}, we have
\(
q(T_i)=e^{-i/m}.
\)
Integrating $q$ stage by stage gives an exact formula.

\begin{lemma}[Geometric acceptance probability]\label{lem:Gm}
For the schedule~\eqref{eq:geomtimes},
\begin{equation}\label{eq:Gm}
G_m:=g(1)
=
r^{-m}\bigl(1-e^{-1/m}\bigr)
\sum_{i=0}^{m-1}\left(re^{-1/m}\right)^i.
\end{equation}
Moreover,
\begin{equation}\label{eq:Gmbound}
\frac1{G_m}<e\left(1+\frac1m\right).
\end{equation}
\end{lemma}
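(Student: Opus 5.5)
The plan has two parts: first evaluate $g(1)$ exactly by integrating $q$ stage by stage using \eqref{eq:qgstage}, then bound the resulting geometric sum.

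\textbf{Exact formula.} For the schedule \eqref{eq:geomtimes} we have $T_i=r^{i-m}$, and the stage lengths satisfy $d_{i+1}/T_i=T_{i+1}/T_i-1=1/m$. By \eqref{eq:qgstage} the exponent accumulated over stage $i$ is $d_{i+1}/T_i=1/m$, so induction gives $q_i=e^{-i/m}$. Also by \eqref{eq:qgstage}, stage $i$ contributes
\[
g_{i+1}-g_i=T_iq_i\bigl(1-e^{-1/m}\bigr)=r^{i-m}e^{-i/m}\bigl(1-e^{-1/m}\bigr).
\]
Summing over $i=0,\ldots,m-1$ and using $g_0=0$ yields \eqref{eq:Gm} directly.

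\textbf{Bound.} Set $x=re^{-1/m}$. Since $1+1/m<e^{1/m}$, we have $x<1$. Summing the geometric series gives
\[
G_m=r^{-m}\bigl(1-e^{-1/m}\bigr)\frac{1-x^m}{1-x}.
\]
Now $x^m=r^me^{-1}$, so $r^{-m}(1-x^m)=r^{-m}-e^{-1}$. Also $1-x=e^{-1/m}\bigl(e^{1/m}-r\bigr)$, which gives
\[
G_m=\bigl(r^{-m}-e^{-1}\bigr)\frac{e^{1/m}-1}{e^{1/m}-1-1/m}.
\]
To establish \eqref{eq:Gmbound}, write $h=1/m$ and bound the two factors separately.

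For the ratio $(e^h-1)/(e^h-1-h)$: the inequality $e^h-1-h\le (h^2/2)e^h$ (compare Taylor coefficients) gives the ratio $\ge 2(e^h-1)/(h^2e^h)=2(1-e^{-h})/h^2$. This is roughly $2/h-1$, which is large.

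For the factor $r^{-m}-e^{-1}$: the lower bound $\log(1+h)\ge h-h^2/2$ gives $r^{-m}=e^{-\log(1+h)/h}\le e^{-1+h/2}$. That is an upper bound, but the direction needed is a lower bound. Using $\log(1+h)\le h-h^2/2+h^3/3$ instead gives $r^{-m}\ge e^{-1}e^{h/2-h^2/3}$. Then
\[
r^{-m}-e^{-1}\ge e^{-1}\bigl(h/2-h^2/3\bigr).
\]
Multiplying the two bounds yields
\[
G_m\ge e^{-1}\bigl(h/2-h^2/3\bigr)\cdot \frac{2(1-e^{-h})}{h^2}=e^{-1}(1-2h/3)\,\frac{1-e^{-h}}{h}.
\]
Using $(1-e^{-h})/h\ge 1-h/2$, the right-hand side is at least $e^{-1}(1-2h/3)(1-h/2)$. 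The goal is to show this exceeds $e^{-1}/(1+h)$. Expanding,
\[
(1+h)(1-2h/3)(1-h/2)=1-\tfrac{h}{6}-\dots,
\]
which falls below $1$, so these crude bounds are too lossy. The estimates therefore need sharpening to second order: keep $(1-e^{-h})/h\ge 1-h/2+h^2/6-h^3/24$, and use a sharper expansion of $e^h-1-h$ in the denominator, namely $\tfrac{h^2}{2}(1+h/3+h^2/12\,e^h)$. The leading terms are then
\[
G_m\approx e^{-1}\bigl(1-\tfrac{2h}{3}\bigr)\bigl(1+\tfrac{h}{2}+\dots\bigr)\big/\bigl(1+\tfrac{h}{3}\bigr)\approx e^{-1}\bigl(1-\tfrac{h}{2}\bigr),
\]
and $(1+h)(1-h/2)\approx 1+h/2>1$, so the bound holds with margin $h/2$ at first order.

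\textbf{Main obstacle.} The difficulty is carrying out this second-order bookkeeping rigorously for all $m\ge1$. I would first handle $m\le 3$ by direct numerical evaluation of \eqref{eq:Gm}. For $h\le 1/4$, I would use explicit alternating-series remainders for $\log(1+h)$ and $e^{\pm h}$, which makes the product comparison a polynomial inequality in $h$ that can be checked on $[0,1/4]$.
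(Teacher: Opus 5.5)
You derive \eqref{eq:Gm} correctly, by the same stage-by-stage integration of $q$ via \eqref{eq:qgstage} that the paper uses. Your closed form $G_m=(r^{-m}-e^{-1})\frac{e^{h}-1}{e^{h}-1-h}$ with $h=1/m$ is also correct. The gap is the bound \eqref{eq:Gmbound}, which the proposal never proves. Your only fully rigorous chain of estimates is too weak, as you yourself find: $(1+h)(1-2h/3)(1-h/2)<1$. What replaces it is a leading-order expansion written with $\approx$ and dropped terms, plus a plan that is not carried out: numerics for $m\le 3$ and a polynomial inequality on $(0,1/4]$ built from Taylor remainders. A first-order margin of $h/2$ proves nothing for a particular $m$ until the remainders are controlled explicitly, so as written \eqref{eq:Gmbound} is unproved. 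The plan is probably workable, but it is delicate for a reason you created: summing the geometric series exactly gives a difference $r^{-m}-e^{-1}$ of nearly equal quantities divided by the small quantity $e^h-1-h$. Each factor must then be estimated including its first correction term, because the margin in \eqref{eq:Gmbound} is itself of relative order $h$. (Incidentally, $e^{-1}(1-h/2)$ is the asymptotics of your lossy lower bound; $G_m$ itself is $e^{-1}(1-h/4+O(h^2))$, consistent with Table~\ref{tab:ratios}.)

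No such bookkeeping is needed. The paper compares $q$ with a continuous schedule. For $t\in[T_i,T_{i+1})$ one has $b(t)=T_i\ge t/r$, hence $\int_p^t ds/b(s)\le r\log(t/p)$ and $q(t)\ge(p/t)^r$. Integrating gives $G_m\ge\frac{p}{r-1}\bigl(1-p^{r-1}\bigr)=p/r=r^{-(m+1)}$, because $p^{1/m}=r^{-1}$. Then $1/G_m\le r^{m+1}<e\,r$, since $(1+1/m)^m<e$. If you prefer to stay with \eqref{eq:Gm}, a crude termwise bound also suffices. Since $x=re^{-1/m}<1$, each summand satisfies $x^i\ge x^{m-1}$, so
\[
G_m\ge m\,r^{-m}\bigl(1-e^{-1/m}\bigr)\,r^{m-1}e^{-(m-1)/m}
=\frac{m\bigl(e^{1/m}-1\bigr)}{e\,r}>\frac{1}{e\,r},
\]
using $e^{1/m}-1>1/m$. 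Either argument gives \eqref{eq:Gmbound} for every $m\ge 1$ at once, with no case split and no numerics.
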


\begin{proof}
The contribution of stage $i$ to $g(1)$ is
\[
\int_{T_i}^{T_{i+1}}q(t)\,dt
=T_iq(T_i)\bigl(1-e^{-1/m}\bigr).
\]
Substituting $T_i=r^{i-m}$ and $q(T_i)=e^{-i/m}$ and summing proves~\eqref{eq:Gm}.

For the bound, note that for $t\in[T_i,T_{i+1})$ we have
\(
T_i\ge \frac tr.
\)
Hence
\[
\int_p^t\frac{ds}{b(s)}
\le
r\int_p^t\frac{ds}{s}
=r\log\frac tp,
\]
and therefore
\(
q(t)\ge\left(\frac pt\right)^r.
\)
It follows that
\[
G_m\ge
\int_p^1\left(\frac pt\right)^r dt
=
\frac{p}{r-1}\left(1-p^{r-1}\right).
\]
Since $r-1=1/m$ and $p^{1/m}=r^{-1}$, the right side equals $p/r=r^{-(m+1)}$.  Thus
\[
\frac1{G_m}\le r^{m+1}
<e\,r
=e\left(1+\frac1m\right),
\]
where $(1+1/m)^m<e$.
\end{proof}

We now prove the competitive ratio in Theorem~\ref{thm:main}.  

\begin{proof}[Proof of Theorem~\ref{thm:main}]
Choose $m=\lceil e/\eps\rceil$ and use the geometric schedule~\eqref{eq:geomtimes}.
Lemma~\ref{lem:Gm} gives
\[
\frac1{G_m}<e\left(1+\frac1m\right)
=e+\frac em
\le e+\eps.
\]
Proposition~\ref{prop:selection} therefore proves the probability-competitive guarantee.  The query complexity follows from Theorem~\ref{thm:complexity} below.
\end{proof}

For illustration, Table~\ref{tab:ratios} lists several values for different numbers $m$ of stages.  The one-stage geometric choice fixes $p=1/2$ and is $3.163953$-probability-competitive.
This is slightly worse than the $3.1462$-guarantee of the one-stage Dynamic Thinning, for which the  parameter is optimized to be $p_*\approx0.46594$.

\begin{table}[t]
\centering
\caption{Multilevel guarantees.  The number of stages is $m$ and the number of refreshes is $m-1$.}
\label{tab:ratios}
\begin{tabular}{rcc}
\toprule
$m$ & $p=(1+1/m)^{-m}$ & $1/G_m=1/g(1)$ \\
\midrule
1  & 0.500000 & 3.163953 \\
2  & 0.444444 & 2.994226 \\
4  & 0.409600 & 2.871416 \\
8  & 0.389744 & 2.798927 \\
16 & 0.379085 & 2.759664 \\
32 & 0.373554 & 2.739244 \\
\bottomrule
\end{tabular}
\end{table}

\subsection{Query complexity}\label{sec:complexity}

We next show that for every fixed number of stages, the number of independence queries is quadratic in $n$ in the worst case.

At stage $j$, span indices lie in
\[
\{0,1,\ldots,j,\staridx\}.
\]
View all recursive calls generated by one use of $\Insert$ or $\Remove$ as a tree.  The initial call is the root, each recursive call made by another call is its child, and a call making no further call is a leaf.  At each node, consider the span index of the element being inserted or removed by that call.  A call with one child never increases this index.  The recycling step can give a node two children.  By Lemma~\ref{lem:layerexchange}, the indices at both children are strictly smaller than the index at their parent whenever the second child occurs with positive probability.  Such a split requires a later level $i\ge1$.  Therefore every path from the root to a leaf contains at most $j$ splits.

\begin{lemma}[Recursion size]\label{lem:recursionsize}
At stage $j$, a call to $\Insert$ or $\Remove$ on at most $n$ elements visits $O(2^j n)$ recursive nodes for every outcome of its random choices.
\end{lemma}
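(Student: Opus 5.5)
We bound the recursion tree by a potential counting argument that combines two facts: each call consumes one element of $L$, and the number of binary splits on any root-to-leaf path is at most $j$. We attach to every node of the tree the pair $(|L|,\sigma)$. Here $L$ is the remaining set passed to that call and $\sigma\in\{0,\ldots,j,\staridx\}$ is the span index of the element being inserted or removed with respect to that call's background.

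We then prove by induction on $|L|$ a bound on the subtree size. Define $N(\ell,k)$ as the maximum number of nodes in the tree of a call with $|L|=\ell$ and span index whose rank in the order $0<1<\cdots<j<\staridx$ is $k$; the rank ranges from $0$ to $j+1$. The base cases are leaves. If $\sigma=0$ or $L=\varnothing$, the call returns immediately, so $N(\ell,0)=1$ and $N(0,k)=1$. Otherwise the call processes the largest element $f$ and makes one or two children, each on $L-f$.

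In the one-child cases ($a=b$, (I1), (I2), keep-at-level in (I3), and the analogous removal cases), the child's span index does not exceed the parent's. For insertion we use $s=\min\{a,\max\{b,i\}\}\le\sigma_F(e)$, which follows from monotonicity of closure; with $a=b$ the background only grows. This gives the recurrence $N(\ell,k)\le 1+N(\ell-1,k)$.

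In the two-child (recycling) case, Lemma~\ref{lem:layerexchange} gives the two children's indices. The first child has index $s=\min\{a,\max\{b,i\}\}$. When the coin can fail we have $q_s>q_a$, so $s<a$. The argument also needs $a\le\sigma_F(e)$; this comes from the closure-exchange part of the proof of Lemma~\ref{lem:layerexchange}, where $\sigma_F(e)=\sigma_F(f)$ whenever $b<a$. The second child inserts $f$ into $F^{i,f}+e$ with index $\min\{b,i\}<a$. Hence both children have rank strictly below $k$, and $N(\ell,k)\le 1+2N(\ell-1,k-1)$. Because only recycling branches can split, and recycling requires $c_f=i\ge1$, we have $\sigma\ge1$ at any split. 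For removal the same index relations hold by the symmetric use of the lemma: $\Remove$ recycles only when the drawn $\ell=i\in K_a$ gives $\pi_i>0$, i.e.\ $q_{\max\{b,i\}}>q_a$.

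We solve the recurrence by guessing $N(\ell,k)\le 2^k(\ell+1)$. The one-child step gives $1+2^k\ell\le 2^k(\ell+1)$. The two-child step gives $1+2\cdot 2^{k-1}\ell\le 2^k(\ell+1)$. The root has $\ell\le n$ and $k\le j+1$, so the tree has at most $2^{j+1}(n+1)=O(2^jn)$ nodes for every realization of the coins. This conclusion holds because the bound uses only the structure of the possible branches, not their probabilities.

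The main obstacle is verifying the strict decrease at splits uniformly for both procedures. In particular we must handle the top index $\staridx$ and the degenerate case where the coin in (I3) has $q_a/q_s=1$. In that case no split occurs, so we restrict attention to branches of positive probability, exactly as the preceding paragraph of the paper does.
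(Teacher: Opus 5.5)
Your proposal is correct and follows essentially the same route as the paper. In both, single-child calls never raise the span index, recycling splits strictly lower it at both children (via Lemma~\ref{lem:layerexchange} together with $\sigma_F(e)=a$ when $b<a$), and every call consumes the largest element of $L$; your induction on $(|L|,\text{rank})$ is a formalized version of the paper's ``at most $2^j$ leaves times depth at most $n+1$'' count. Your bound $2^{j+1}(n+1)$ is a factor $2$ weaker than the paper's $2^j(n+1)$ only because you allow a split at every rank, whereas a split actually needs parent index $a>i\ge1$; either way the bound is $O(2^jn)$.
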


\begin{proof}
There are at most $2^j$ leaves because the span index can strictly decrease through at most $j$ positive level values at splits.  Between two splits, each recursive step removes the largest remaining element from the scan, so every path from the root to a leaf has length at most $n+1$.  Multiplying gives the bound.
\end{proof}

To implement span-index calculations, maintain for each background level $F_i$ a greedy set.  Testing whether $z\in\cl(F_i)$ requires one independence query on that greedy set plus $z$.  When a recursive step adds one processed element to $F_i,F_{i+1},\ldots,F_j$, each affected greedy set can be updated with one query.  Thus each recursive node uses $O(j+1)$ independence queries.  Combining this bound with the $O(2^j n)$ recursive nodes from Lemma~\ref{lem:recursionsize}, each call to $\Insert$ or $\Remove$ made by the online algorithm during stage $j$ uses
\begin{equation}\label{eq:localeffort}
O\bigl((j+1)2^j n\bigr)
\end{equation}
independence queries.

The next lemma bounds the number of calls made by the online algorithm.

\begin{lemma}[At most one deletion per element per stage]\label{lem:oneperstage}
During a fixed stage $j$, each element of $E$ can be selected for at most one internal deletion.  Hence there are at most $n$ internal deletions in that stage and at most $mn$ over the whole execution.
\end{lemma}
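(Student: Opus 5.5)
The claim follows once we show that, within stage $j$, an element $f$ that has been internally deleted stays outside the active set $X_j$ until the next refresh. The argument follows an element through the stage. An internal deletion of $f$ at time $t\in[T_j,T_{j+1})$ requires $f\in C_t\setminus A$, and in particular $f\in X_0\subseteq X_j$, so $f$ must be active when its clock rings. Line~\ref{line:online-delete} then removes $f$ from every current reference set, including the outer set $X_j$. It therefore suffices to show that, from that moment until $T_{j+1}$, nothing the algorithm does can put $f$ back into $X_j$.

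Next I would list the operations that can change membership in $X_j$ during stage $j$. A refresh does not occur strictly inside the stage; the refresh at $T_{j+1}$ begins stage $j+1$ and creates a new outer set. At a real arrival of $e$, the discussion in Section~\ref{sec:arrival} shows that $\Insert$ only reassigns levels $r_x\in\{0,\ldots,j\}$ to elements $x\in L\subseteq X_j$, and then adds $e$. Hence $X_j^+=X_j^-+e$, and $e\neq f$ because $f$ arrived earlier. An internal deletion of another element $z$ applies a level assignment returned by $\Remove$ on $L\subseteq X_j$ and then removes $z$, so $X_j^+=X_j^--z$.

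For both scans I would check by induction on the recursion that every returned assignment has domain contained in the original $L$, with values in $\{0,\ldots,j\}$. The recycling branches recurse on $L-f$ with the same element set, so the domain condition carries through. Consequently no operation within the stage adds a non-arrived or previously deleted element to $X_j$. Once $f$ leaves $X_j$ it remains outside $X_j$, hence outside $X_0$ and outside $C_t$, and it receives no clock (line~\ref{line:online-restart} assigns clocks only to $C_t\setminus A$). So $f$ cannot be selected again before $T_{j+1}$.

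The counting is then immediate. Each internal deletion selects a distinct element of $E$, so stage $j$ has at most $n$ internal deletions, and summing over the $m$ stages gives at most $mn$. Two minor points remain. First, an element may be selected again in a later stage after the refresh $X_{j+1}=H_{T_{j+1}}$ restores it, which the per-stage statement allows. Second, almost surely no two clock rings or arrivals coincide. The one step requiring real care is confirming that neither scan's output domain ever includes an element outside $X_j$. This holds because applying $r$ only adjusts memberships of elements in the domain of $r$, and that domain is $L\subseteq X_j$.
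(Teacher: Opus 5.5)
Your proposal is correct and follows the paper's own argument. A deleted element leaves the outer set $X_j$, and for the rest of the stage arrivals and both scans only reassign levels in $\{0,\ldots,j\}$ to elements of $L\subseteq X_j$ (plus adding a fresh arrival), so the element cannot re-enter $X_j$ or receive a clock before the refresh at $T_{j+1}$; you spell out the domain-and-range check on the scan outputs that the paper states in one sentence.
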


\begin{proof}
When an element is selected for an internal deletion, it is removed from the current outer reference set $X_j$ and placed among the arrived elements outside $X_j$.  During the remainder of the same stage, insertion and removal scans operate only on elements already in $X_j$ and on a newly arriving element.  They never access an old element outside $X_j$.  The deleted element can become active again only at the next scheduled refresh, when a new outer reference set is created from the set of arrived elements.
\end{proof}


\begin{theorem}[Worst-case query complexity]\label{thm:complexity}
Multilevel Dynamic Thinning with $m$ stages uses
\(
O(m2^m n^2)
\)
independence queries in the worst case.  
\end{theorem}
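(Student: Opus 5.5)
The plan is to bound the total number of independence queries by counting the number of events of each type and multiplying by the per-event cost. The algorithm performs queries only at three kinds of events: real arrivals (the greedy test in line~\ref{line:online-greedytest} and the call to $\Insert$), internal deletions (the call to $\Remove$), and the bookkeeping that recomputes $C_t=X_0\cap\OPT(X_j)$ after a change (line~\ref{line:online-restart}). Refreshes themselves only assign levels and need no queries beyond those charged to the subsequent recomputation of $C_t$.

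First, count events. There are at most $n$ real arrivals over the whole execution. By Lemma~\ref{lem:oneperstage}, there are at most $n$ internal deletions per stage, hence at most $mn$ in total. There are $m-1$ refreshes. So the number of events that change the reference sets or $A$ is $O(mn)$.

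Second, bound the cost of each event. For an arrival, the greedy test $\OPT(X_j+e)$ costs $O(n)$ queries by running greedy on at most $n+1$ elements. Each call to $\Insert$ or $\Remove$ during stage $j\le m-1$ costs $O((j+1)2^jn)=O(m2^mn)$ queries by~\eqref{eq:localeffort}, which in turn relies on Lemma~\ref{lem:recursionsize}. Recomputing $C_t$ requires $\OPT(X_j)$ and a membership check against $X_0$; a single greedy pass over $X_j$ costs $O(n)$ queries. Setting up the backgrounds $F_i$ with their maintained greedy sets at the root of a scan costs $O((j+1)n)$ queries, which is dominated by the scan cost.

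Finally, multiply. The $n$ arrivals contribute $n\cdot O(m2^mn)$, the $mn$ deletions contribute $mn\cdot O(m2^mn)$, and the recomputations contribute $O(mn)\cdot O(n)$. The deletion term is the main obstacle, since it would naively give $O(m^22^mn^2)$. To avoid this, I would not use the uniform bound per stage but instead sum over stages with the stage-dependent cost: stage $j$ has at most $n$ deletions, each of cost $O((j+1)2^jn)$, so the total is
\[
\sum_{j=0}^{m-1} n\cdot O\bigl((j+1)2^jn\bigr)=O\bigl(m2^mn^2\bigr),
\]
because $\sum_{j<m}(j+1)2^j\le m2^m$. Arrivals are bounded in the same way, giving $n$ times the maximum per-call cost, $O(m2^mn)$. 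Adding the $O(mn^2)$ bookkeeping gives the claimed $O(m2^mn^2)$ worst-case bound. All queries involve sets of already-arrived elements, because the scans touch only $X_j$ and the arriving element. This completes the query count used in Theorem~\ref{thm:main}.
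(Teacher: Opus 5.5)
Your proposal is correct and matches the paper's proof: you bound each top-level scan in stage $j$ by $O((j+1)2^jn)$ via~\eqref{eq:localeffort}, use Lemma~\ref{lem:oneperstage} to get at most $n$ deletions per stage, and sum $\sum_{j<m}(j+1)2^j\le m2^m$ to avoid the extra factor of $m$; you then charge each of the $n$ arrivals the maximal per-call cost and absorb the $O(mn^2)$ greedy-test and recomputation queries. You also account for setting up the background greedy sets at the root of each scan and for the recomputations after refreshes, which the paper leaves implicit, and you correctly show both are dominated.
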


\begin{proof}
We count separately the top-level calls to $\Remove$, the top-level calls to $\Insert$, and the queries used to maintain the greedy sets.  At stage $j$,~\eqref{eq:localeffort} gives
\(
O((j+1)2^j n)
\)
queries per top-level call made by the online algorithm.  By Lemma~\ref{lem:oneperstage}, at most $n$ internal-deletion events occur in stage $j$, and each such event makes one top-level call to $\Remove$.  The total number of queries in these calls is therefore
\[
O(
  n^2\sum_{j=0}^{m-1}(j+1)2^j
)
=O(m2^m n^2).
\]

Each of the $n$ real arrivals invokes $\Insert$ at most once, namely when it passes the greedy test.  If this call occurs during stage $j$, then \eqref{eq:localeffort} bounds its number of queries by
\[
O\bigl((j+1)2^j n\bigr)
\le O\bigl(m2^{m-1}n\bigr).
\]

Finally, the greedy test at an arrival and the recomputation of $\OPT(X_j)$ and $C_t$ after a state-changing arrival or internal deletion each require $O(n)$ additional independence queries.  There are at most $n$ arrivals and, by Lemma~\ref{lem:oneperstage}, at most $mn$ internal deletions, so these maintenance queries total $O(mn^2)$, absorbed by $O(m2^m n^2)$.
\end{proof}

This complexity is a worst-case statement over both arrival orders and the algorithm's internal random choices.  The notation $O_\eps(n^2)$ in Theorem~\ref{thm:main} means that the hidden constant may depend on the fixed accuracy parameter.  Our explicit implementation gives a dependence of the form $2^{O(1/\eps)}$.

\section{Relation to Prior Work}\label{sec:related}

Multilevel Dynamic Thinning is a direct extension of Dynamic Thinning by Chan et al.~\cite{clw2026}: it replaces one reference set by a nested hierarchy, and the case $m=1$ recovers their algorithm with the same sampling parameter $p$.

\begin{proposition}[Dynamic Thinning as the one-level case]\label{prop:basecase}
Suppose $m=1$, so there is no refresh and $b(t)=p$.  Then Multilevel Dynamic Thinning reduces exactly, on the pair $(X_0,A)$, to Dynamic Thinning with parameter $p$ from~\cite{clw2026}.
\end{proposition}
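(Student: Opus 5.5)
We specialize every ingredient of Algorithm~\ref{alg:online} to $j=0$ and check that each reduces to the corresponding rule of Dynamic Thinning with parameter $p$ as described in the introduction and in~\cite{clw2026}. With $m=1$ there is only one stage, $b(t)=p$, so by~\eqref{eq:qgstage} $q(t)=e^{-(t-p)/p}$ and $g(t)=p(1-e^{-(t-p)/p})$. Hence $g(t)/p=1-q(t)$, which matches the time-dependent thinning probability of Dynamic Thinning. The deletion rate~\eqref{eq:deletionrate} becomes $q(t)/(p-g(t))=q(t)/(p\,q(t))=1/p$. We will verify that this is the constant removal rate that Dynamic Thinning applies to unaccepted elements of $\OPT(X_0)$. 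With $j=0$ we also have $X_j=X_0$ and $C_t=\OPT(X_0)$.

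Next we check what happens at an arrival. Span indices take values only in $\{0,\staridx\}$, and every level is $0$. In $\Insert$, case (I3) never occurs, so no recycling step appears, and the $a=b$ branch only carries the background forward. If $e\notin\OPT(X_0+e)$, the element is rejected, as in Dynamic Thinning. Otherwise $\sigma_F(e)=\staridx$. The recursion then moves down $L$ until it reaches some $f$ with $a>b$, which forces $a=\staridx$ and $b=0$; equivalently, $e\in\cl(F_0+f)$. By Lemma~\ref{lem:exchange}, this $f$ is exactly the element of $\OPT(X_0)$ that $e$ replaces. The remaining cases are as follows.
\begin{itemize}
\item If $f\in A$, case (I1) has continuation probability $q_\staridx g_0/(q_0g_\staridx)=0$, so $e$ is rejected.
\item If $f\notin A$, case (I2) continues with background $F+f$. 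Now $F_0+f$ spans $e$, so the next call returns immediately with success; $e$ is accepted and nothing else changes.
\item If no such $f$ exists, so that $e\notin\cl(X_0)$ and $\OPT(X_0+e)=\OPT(X_0)+e$, the scan reaches $L=\varnothing$. There it succeeds with probability $q_\staridx=q(t)$. The element is therefore accepted with the time-dependent probability $q(t)$, which should be the acceptance probability for a free addition in Dynamic Thinning.
\end{itemize}

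We then check internal deletions. In $\Remove$ we have $K_a=\varnothing$, so $\eta=1$ by Lemma~\ref{lem:scalarbalance}. No level is ever changed, and the deletion simply removes $f$ from $X_0$. The online state therefore reduces to the pair $(X_0,A)$, and the transitions coincide rule by rule.

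The main obstacle is matching constants exactly with the formulation in~\cite{clw2026}: the acceptance probability for a free addition, the removal rate, and the sample parameter. We will compare these against their stated algorithm. If a normalization differs, for example if their clocks are phrased as removal probabilities instead of rates, we would instead match the joint distributions. Theorem~\ref{thm:jointdistribution} with $j=0$ gives that, conditional on $X_0$, elements of $\OPT(X_0)$ are accepted independently with probability $1-q(t)$, and that $\Pr[X_0]\propto p^{|X_0|}(t-p)^{|Z|}(1-t)^{|U|}$. These are exactly the invariants of Dynamic Thinning, and the transition rules coincide once the rates agree.
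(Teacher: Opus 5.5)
Your proposal is correct and follows the same route as the paper. Both specialize to $j=0$, compute $q(t)=e^{-(t-p)/p}$, $g(t)=p(1-q(t))$ and the deletion rate $1/p$, and then check the rules one by one. $\Insert$ reduces to four cases: rejection, free addition with probability $q(t)$, exchange with an unaccepted partner with probability one, and exchange with an accepted partner with probability zero because $g_0=0$. $\Remove$ changes no memberships, since $K_a=\varnothing$ forces $\eta=1$.

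Your use of Lemma~\ref{lem:exchange} to identify the first $f$ with $a>b$ as the exchange partner is a detail the paper leaves implicit. The joint-distribution fallback is unnecessary, and matching one-time marginals alone would not give an exact reduction; the rule-by-rule match already does the job.
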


\begin{proof}
Equation~\eqref{eq:qgdef} gives
\[
q(t)=e^{-(t-p)/p},
\qquad
g(t)=p(1-q(t)).
\]
There are no levels beyond $0$, hence no recycling step in the insertion rule of Section~\ref{sec:local}.  If a newly arriving element $e$ does not belong to $\OPT(X_0+e)$, the algorithm rejects it.  If $e$ increases the reference rank, the insertion scan reaches the case $L=\varnothing$ and succeeds with probability $q(t)$.  If $e$ exchanges with an unaccepted member of $\OPT(X_0)$, the scan succeeds with probability one.  If the exchange partner has already been accepted, the relevant continuation probability is zero.  These are exactly the four acceptance cases of Dynamic Thinning.

The internal deletion rate in~\eqref{eq:deletionrate} is
\[
\frac{q(t)}{p-g(t)}=\frac{q(t)}{pq(t)}=\frac1p.
\]
With no later level, the paired removal scan changes no other reference-set membership.  Thus an internal event simply deletes an unaccepted member of $\OPT(X_0)$ at rate $1/p$, as in Dynamic Thinning.
\end{proof}

Both algorithms build on Singla's framework of reversibly updating a random reference configuration while preserving its distribution~\cite{singla2026}.
The more specific conditional-thinning invariant comes from Abdi et al.'s single-sample prophet algorithm~\cite{abdi2026}.  Conditional on the current reference, each already processed element in the reference greedy set has been accepted independently with probability $1/2$.  In Dynamic Thinning \cite{clw2026}, conditional on the current reference, each element in the reference greedy set has been accepted independently with the time-dependent probability $1-q(t)$.  Their joint distribution for the reference set, the arrival partition, and the accepted set is the direct predecessor of our Theorem~\ref{thm:jointdistribution}.  Our theorem extends that distribution to several reference levels, with conditional acceptance probabilities determined by span indices.

The main technical challenge in our multilevel construction is to design the insertion and removal rules and pair their recursive choices so that the prescribed joint distribution is maintained.  
These procedures and the method of pairing their recursive choices are derived directly from Huang's forward and reverse kernels~\cite{huang2026}.
His construction pairs successful forward transitions with reverse transitions, preserves the survivor set, and maintains a uniform latent vector through a probability-mass balance.
Huang \cite{huang2026} explicitly attributes his choice to control the full output distribution to Singla \cite{singla2026} and to the stationary online contention resolution schemes of Aminian,
 Niazadeh, and Nuti~\cite{aminian2026}.  The latter work \cite{aminian2026} designs distributions on feasible sets and implements them online through updates preserving those distributions.  

Our algorithm is a finite-level discretization of Huang's survival-process construction.  For each fixed instance, under the geometric refresh schedule~\eqref{eq:geomtimes}, the distribution of the final accepted set converges to that of Huang's algorithm with sampling threshold $1/e$ as $m\to\infty$.  
This follows because the discrete random time defined by the reference hierarchy converges to Huang's continuous random time, while the corresponding conditional acceptance probabilities converge to his survival probabilities.  Huang attains the exact $e$ guarantee with expected polynomial query complexity.  Our finite hierarchy yields $e+\eps$ for each fixed $\eps>0$ and a worst-case $O_\eps(n^2)$ query bound.  The dependence on $\eps$ is exponential in $1/\eps$ in our implementation, so this is a different ratio--complexity tradeoff.

\paragraph{AI statement.}
The author used GPT-5.6 Sol and GPT-6 Astra to help design the algorithm and develop its proofs.  The author has checked the proofs and takes full responsibility for their correctness and for the contents of this paper.

\end{document}